\newcommand{\lwx}[1]{\textcolor{black}{#1}}
\newcommand{\lyf}[1]{\textcolor{black}{#1}}
\newcommand{\louis}[1]{\textcolor{black}{#1}}
\newtheorem{theorem}{Theorem}[section]
\newtheorem{lemma}[theorem]{Lemma}
\newenvironment{proofsketch}{\begin{proof}[Proof sketch]}{\end{proof}}
\setlist[enumerate]{leftmargin=*}
\setlist[itemize]{leftmargin=*}
\begin{document}

\title{\textsc{Neural Dehydration}: Effective Erasure of Black-box Watermarks from DNNs with Limited Data}



\author{Yifan Lu}
\orcid{0009-0000-7500-3722}
\affiliation{
  \institution{Fudan University}
  \city{Shanghai}
  \country{China}
}
\email{luyf23@m.fudan.edu.cn}

\author{Wenxuan Li}
\orcid{0009-0005-7636-5190}
\affiliation{
  \institution{Fudan University}
  \city{Shanghai}
  \country{China}
}
\email{liwx22@m.fudan.edu.cn}

\author{Mi Zhang}
\orcid{0000-0003-3567-3478}
\affiliation{
  \institution{Fudan University}
  \city{Shanghai}
  \country{China}
}
\email{mi_zhang@fudan.edu.cn}
\authornotemark[1]

\author{Xudong Pan}
\orcid{0000-0003-1394-0395}
\affiliation{
  \institution{Fudan University}
  \city{Shanghai}
  \country{China}
}
\email{xdpan@fudan.edu.cn}

\author{Min Yang}
\orcid{0000-0001-9714-5545}
\affiliation{
  \institution{Fudan University}
  \city{Shanghai}
  \country{China}
}
\email{m_yang@fudan.edu.cn}
\authornote{Corresponding authors: Mi Zhang and Min Yang.}


\begin{abstract}
To protect the intellectual property of well-trained deep neural networks (DNNs), black-box watermarks, which are embedded into the prediction behavior of DNN models on a set of specially-crafted samples and extracted from suspect models using only API access, have gained increasing popularity in both academy and industry.
Watermark robustness is usually implemented against attackers who steal the protected model and obfuscate its parameters for watermark removal.
However, current robustness evaluations are primarily performed under moderate attacks or unrealistic settings.
Existing removal attacks could only crack a small subset of the mainstream black-box watermarks, and fall short in four key aspects: incomplete removal, reliance on prior knowledge of the watermark, performance degradation, and high dependency on data.




In this paper, we propose a watermark-agnostic removal attack called \textsc{Neural Dehydration} (\textit{abbrev.} \textsc{Dehydra}), which effectively erases all ten mainstream black-box watermarks from DNNs, with only limited or even no data dependence.
In general, our attack pipeline exploits the internals of the protected model to recover and unlearn the watermark message.
We further design target class detection and recovered sample splitting algorithms to reduce the utility loss and achieve data-free watermark removal on five of the watermarking schemes.
We conduct comprehensive evaluation of \textsc{Dehydra} against ten mainstream black-box watermarks on three benchmark datasets and DNN architectures.
Compared with existing removal attacks, \textsc{Dehydra} achieves strong removal effectiveness across all the covered watermarks, preserving at least $90\%$ of the stolen model utility, under the data-limited settings, i.e., less than $2\%$ of the training data or even data-free.
Our work reveals the vulnerabilities of existing black-box DNN watermarks in realistic settings, highlighting the urgent need for more robust watermarking techniques.
To facilitate future studies, we open-source our code in the following repository: \url{https://github.com/LouisVann/Dehydra}.
\end{abstract}

\begin{CCSXML}
<ccs2012>
<concept>
<concept_id>10002978.10002991</concept_id>
<concept_desc>Security and privacy~Security services</concept_desc>
<concept_significance>500</concept_significance>
</concept>
<concept>
<concept_id>10010147.10010178</concept_id>
<concept_desc>Computing methodologies~Artificial intelligence</concept_desc>
<concept_significance>500</concept_significance>
</concept>
</ccs2012>
\end{CCSXML}

\ccsdesc[500]{Security and privacy~Security services}
\ccsdesc[500]{Computing methodologies~Artificial intelligence}

\keywords{Model Watermarking; Robustness; Removal Attack}


\maketitle

\section{Introduction}

In recent years, deep neural networks (DNNs) are empowering real-world applications in computer vision~\cite{parkhi2015deep, zhang2019medical}, natural language processing~\cite{hassan2018convolutional, devlin2018bert}, and autonomous driving~\cite{bachute2021autonomous, allodi2016machine}.
However, training a modern DNN from scratch requires time-consuming data collection and high-end computing resources.
To protect the intellectual property of well-trained DNNs, \textit{model watermarking} is an emerging tool for verifying the ownership of DNNs in case of model stealing or abusing.

Generally, in a DNN watermarking scheme, the secret watermark information is embedded into the \textit{target model} at the watermark embedding stage during training.
At the subsequent verification stage, the watermark can be extracted from the \textit{suspect model} to determine its ownership.
Depending on how the suspect model is accessed during verification, current watermark algorithms can be categorized into \textit{white-box} and \textit{black-box} watermarks.
A white-box watermark is usually embedded into the suspect model's internal parameters~\cite{uchida2017embedding, wang2021riga, chen2021lottery} or neuron activation~\cite{darvish2019deepsigns}.
In comparison, a black-box watermark is embedded into a model's prediction behaviour on a set of specially-crafted samples (i.e., \textit{watermark data}) by specifying their expected classification results (i.e., \textit{target classes}) \cite{adi2018turning, zhang2018protecting, jia2021entangled, lee2022evaluating}.
Owing to the weaker access requirement for verification, black-box watermarks have gained increasing popularity in both academia and industry (e.g., IBM~\cite{IBM}), with various designs in watermark data and target class settings spurred.


The robustness of black-box DNN watermarking schemes against \textit{removal attacks}, which modifies the model parameters to cause verification failure, is crucial for their real-world trustworthiness.
\louis{Unfortunately, current robustness evaluations are mostly performed under moderate attacks or unrealistic settings.}
Specifically, existing attacks are mainly categorized into three types: pruning-based \cite{uchida2017embedding, liu2018finepruning}, finetuning-based \cite{libo2021refit, Shafieinejad2019regularization, Zhong2022distraction} and unlearning-based \cite{Aiken2020Laundering}.
However, these approaches could only crack a small subset of existing black-box watermarks (i.e., reducing the watermark verification success rate to almost random, with most of the utility preserved),
which are unable to pose practical threats when the attacker is watermark-agnostic.
To be more specific, these attacks may to varied degrees remove the watermark incompletely, require prior knowledge of the target watermark, hamper the model's utility or have a strong dependence on the clean dataset.
Recently, another type of model extraction-based attack has been demonstrated to effectively remove the watermarks when over 30\% of the training data is available~\cite{tramer2016stealing, lukas2022sok}.
However, as commercial DNNs are usually trained on private data, it is usually impractical to access such an amount of training data.
A general comparison of their attack budgets is summarized in Table~\ref{tab:atk_req} and a more detailed analysis is presented in Section~\ref{sec:back-removal}.

\noindent\textbf{Our Work.}
In this paper, we propose a watermark-agnostic removal attack called \textsc{Neural Dehydration} (\textit{abbrev.} \textsc{Dehydra}), which effectively erases all ten mainstream black-box watermarks from DNNs, with only limited or even no data dependence.
To achieve these objectives, we mainly address the following two challenges:


\louis{\textit{C1: How to design a watermark-agnostic effective removal attack?}}

Despite the complex designs of existing black-box watermarks in watermark data and target class settings, we identify a shared trait: they exploit the over-parameterization of DNNs to especially memorize the watermark data correlated to the target classes.
Therefore, the model internals should be a sufficient source to reconstruct the watermark information.
Inspired by this, we design a general-purpose attack framework to \textbf{recover and unlearn} the watermark data from the protected model, similar to the dehydration reaction in chemistry.
Specifically, at the first stage, we use an aggressive model inversion technique to reverse-engineer class-wise samples close to the real watermark data from a target watermarked model (\S\ref{sec:recover}).
At the second stage, we intentionally unlearn these samples during the finetuning process (\S\ref{sec:unlearn}).
This basic \textsc{Dehydra} framework is effective against all the ten investigated black-box watermarks.


\louis{\textit{C2: How to preserve the model utility and reduce the data dependence during attack?}}

Despite effective, the basic attack might compromise the model's utility, because the recovered samples could also contain information critical to the main task.
To address this issue, we further enhance \textsc{Dehydra} in the following directions.
\ding{182} We improve the recovering stage of \textsc{Dehydra} via incorporating \textbf{target class detection}.
For watermarks with fixed target classes (such as~\cite{zhang2018protecting} and~\cite{jia2021entangled}, where all the watermark data are paired with the identical target class), we derive an observation called \textit{watermark smoothness discrepancy} that, \textit{the loss landscape of model is smoother on the target class than other classes}.
We provide both theoretical and empirical analysis on the class-wise smoothness properties, and leverage the discrepancy to distinguish whether the target watermark has a fixed target class, and, if so, detect its target class.
This allows us to specially unlearn the recovered samples belonging to the target class when cracking a fixed-class watermark (\S\ref{sec:recover2}).
\ding{183} We improve the unlearning stage of \textsc{Dehydra} via \textbf{splitting the recovered samples}.
Our design are based on the observed \textit{normal data dominance} phenomenon that, 
\textit{the class-wise recovered samples should be closer to normal samples compared with the watermark samples in that class}.
Based on this, we develop a neuron-level criterion to carefully split the recovered samples into proxy watermark data and proxy normal data, with the proxy watermark data to ensure the removal effectiveness and the proxy normal data to preserve the original utility (\S\ref{sec:unlearn2}).
With the above improvements, \textsc{Dehydra} minimally impacts the model's utility while ensuring its watermark-agnostic removal effectiveness.
Moreover, the split proxy normal data can even allow for a data-free removal attack for watermarks identified with fixed target classes.
Table~\ref{tab:atk_req} summarizes the advantages of our attack compared with existing removal approaches.

\louis{We conduct comprehensive evaluation of \textsc{Dehydra} against ten mainstream black-box watermarks on three benchmark datasets and DNN architectures, under three different data settings.
Compared with six baseline removal attacks, our proposed \textsc{Dehydra} achieves strong removal effectiveness across all the covered watermarks, preserving  at least $90\%$ model utility, under the data-limited settings, i.e., less than $2\%$ of the training data or even data-free.
}

To facilitate future studies, we open-source our code in \url{https://github.com/LouisVann/Dehydra}.
Also, an extended version of this paper, with appendix included, is available in \cite{lu2024dehyd}.

\noindent\textbf{Our Contribution.}
We summarize the key contributions of this work as follows:
\begin{itemize}
    \item \louis{We propose a watermark-agnostic removal framework, namely \textsc{Dehydra}, which exploits the model internals to recover and unlearn the underlying watermarks.}
    \item \louis{Based on in-depth analysis, theoretical justifications and pilot studies}, we further improve \textsc{Dehydra} with target class detection and recovered sample splitting algorithms, which help preserve the model utility and significantly reduce the dataset dependence.
    \item Extensive evaluations on various settings against ten mainstream black-box watermarks show that \textsc{Dehydra} is generally effective and has minimal utility impact, with much more relaxed or even no assumptions on the dataset availability.

\end{itemize}

\begin{table}[!t]
\caption{Comparison of attack budgets between our attack and existing watermark removal attacks, where $\CIRCLE / \LEFTcircle / \Circle$ represent high, medium and low (or no) attack budget.}
\label{tab:atk_req}
\vspace{-0.1in}
\centering
\setlength{\tabcolsep}{3pt}
\begin{tabular}{c c c c c} 
\hline
  \textbf{Attack} & \textbf{Removal} & \textbf{Watermark} & \textbf{Utility} & \textbf{Dataset}\\
  \textbf{Type} & \textbf{Ineffectiveness} & \textbf{Knowledge} & \textbf{Loss} & \textbf{Access}\\
  \hline
  \textbf{Prune} & \LEFTcircle & \Circle & \LEFTcircle & \Circle \\
  \textbf{Finetune} & \LEFTcircle & \Circle & \LEFTcircle & \CIRCLE \\
  \textbf{Unlearn} & \Circle & \CIRCLE & \Circle & \LEFTcircle \\
  \textbf{\textsc{Dehydra}} & \Circle & \Circle & \Circle & \Circle \\
  \hline
\end{tabular}
\vspace{-0.1in}
\end{table}

\section{Background}
\subsection{DNN Watermarks}\label{sec:back-wm}
\louis{Traditional model protection techniques employ proactive strategies to prevent unauthorized access, such as encrypting model parameters or restricting access to APIs.
However, these methods have been proven by subsequent research to face leakage risks \cite{tramer2016stealing, sun2021mind}.
To further claim and trace the ownership of DNN models, model watermarking technology has emerged.}

A white-box watermarking scheme typically embeds a sequence of secret messages into the parameters or neural activations of the target model, and thus requires white-box access of the suspect model to extract the watermark~\cite{uchida2017embedding, wang2021riga, darvish2019deepsigns, chen2021lottery}.
On the contrary, a black-box watermark \louis{(sometimes also called backdoor-based \cite{Shafieinejad2019regularization, liu2021wild}, poisoning-based \cite{Guo2020PST} or trigger set-based \cite{lee2022evaluating} watermarks)} is embedded in the model's prediction behavior on  a set of specially-crafted secret samples (i.e., \textit{watermark data}) by specifying their expected prediction results (i.e., \textit{target classes})~\cite{adi2018turning, zhang2018protecting, le2020afs, jia2021entangled}.
Therefore, the verification process of black-box watermarking only requires the access to the prediction API.

To build a trustworthy ownership verification process, an ideal watermarking scheme should satisfy at least the following three key requirements~\cite{lukas2022sok, yan2023rethinking}.
\ding{182} \textbf{Fidelity:} The performance degradation on the target model should be as low as possible after the watermark is embedded.
\ding{183} \textbf{Integrity:} Models which are trained independently without access to the target model should not be verified as containing the watermark.
\ding{184} \textbf{Robustness:} The embedded watermark in the target model should be resistant to potential removal attacks. In the worst cases when the watermark is removed, the utility of the model should decrease dramatically.

We mainly study on the black-box watermarks due to their more realistic settings.
Existing black-box watermark schemes have diverse designs in the choices of the \textit{watermark data} and the \textit{target label}. In terms of watermark data, some work leverages specially-crafted patterns~\cite{zhang2018protecting, jia2021entangled} or random noise~\cite{zhang2018protecting} as watermark patterns.
Other black-box watermarks may use out-of-distribution samples as watermark data~\cite{adi2018turning, zhang2018protecting}.
Besides, in-distribution clean samples~\cite{namba2019ew}, adversarial samples near the decision boundary~\cite{le2020afs} or images with an imperceptible logo embedded by an encoder~\cite{li2019blind} can also be exploited as watermark data.
In terms of the target class setting, existing black-box DNN watermarks can be categorized into fixed-class watermarks, where all the watermark data are paired with the identical target class (such as~\cite{zhang2018protecting, jia2021entangled, li2019piracy}),
and non-fixed-class watermarks, where each watermark data is paired with its own target class (such as~\cite{adi2018turning, le2020afs, namba2019ew, li2019blind, guo2018mark}).
\louis{These complex designs pose additional challenges to a practical attacker, who usually has no knowledge of the underlying watermark schemes.}


\subsection{Watermark Removal Attacks}\label{sec:back-removal}
Black-box model watermarks are to some extent similar to backdoor attacks, as they both establish connections between some specified data and the target labels.
Therefore, classical methods in backdoor defense, such as pruning, finetuning or trigger reverse-engineering, are usually considered potential threats for black-box watermarks~\cite{adi2018turning, zhang2018protecting, jia2021entangled}.
Recently, some attacks especially targeting black-box watermarks are also proposed.

Existing watermark removal attacks can be mainly categorized into three types as follows, with the pros-and-cons of attacks from each category summarized in Table~\ref{tab:atk_req}.
\begin{itemize}
    \item \textbf{Pruning-based Attacks}: They prune the redundant weights~\cite{uchida2017embedding} or neurons~\cite{liu2018finepruning} in DNNs.
    However, to invalidate the underlying watermarks, these methods need to prune a large proportion of weights or neurons, which causes an unacceptable utility loss.
    \item \textbf{Finetuning-based Attacks}: They continue to train the target model for a few epochs with some carefully designed finetuning techniques, such as learning rate schedule~\cite{libo2019leveraging, libo2021refit}, dataset augmentation~\cite{liu2021wild}, weight regularization~\cite{Shafieinejad2019regularization}, or continue learning with attention distraction~\cite{Zhong2022distraction}.
    Unfortunately, these attacks could only breach a subset of watermarks. Furthermore, due to stringent training restrictions, they require a large amount of clean data to maintain utility.
    \sloppy
    \item \textbf{Unlearning-based Attacks}: They are mainly designed for black-box watermarks whose watermark data have the fixed pattern and the identical target class~\cite{Aiken2020Laundering, Wang2021GAN}.
    For instance, the \textit{Laundering} attack~\cite{Aiken2020Laundering} uses the classical trigger reverse-engineering technique in backdoor defense literature \textit{NeuralCleanse}~\cite{wang2019NC}, followed by an unlearning process.
    However, these attacks pose strong assumptions on the forms of underlying watermarks and are therefore not applicable to physical scenarios where attackers have no knowledge of the target watermarks.    
    Also, current unlearning-based attacks all need access to source training data, which further limits their applicable scenarios.
\end{itemize}
Note that there also exists another type of attack aiming at training a new surrogate model using the knowledge transferred from the given watermarked model, i.e., \textbf{model extraction attacks}~\cite{Yang2019distill_wm, jia2021entangled, lukas2022sok}.
However, extraction attacks typically require a large query dataset and entail heavy computation costs~\cite{Shafieinejad2019regularization}. Therefore, 
we do not include these attacks in the main experiment, but will make a comparison with them under different data settings in Section~\ref{sec:discussion}.


\section{Security Settings}

\subsection{Mechanism of Black-box Watermarking}
The true model owner is denoted as $\mathcal{O}$.
During the training, $\mathcal{O}$ embeds a watermark into the target model by training on both the clean dataset $X = \{(x_i, y_i)\}_{i=1}^{N_1} \ (y_i \in \{0,1,\dots,C-1\})$ from its main task, where $C$ is the total number of the classification classes,
and on a set of specially-crafted watermark data $X_w = \{({x_w}_i, {y_w}_i)\}_{i=1}^{N_2} \ ({y_w}_i \in \{0,1,\dots,C-1\})$, where target classes ${y_w}_i$ can be either identical or sample-specific. We denote the watermarked model as $f_{w}$.
In face of potential copyright infringement, $\mathcal{O}$ hopes to verify the model ownership 
by comparing the suspect model's predictions on $X_{w}$ with the pre-defined target labels, via the provided API (i.e., black-box access), as shown in Figure~\ref{fig:atk_setting}.

By inspecting the official watermark implementations, we observe the following technical designs which the owner $\mathcal{O}$ will take to meet the three requirements in Section~\ref{sec:back-wm}.
Specifically, \ding{182} to satisfy the \textbf{fidelity} purpose, $\mathcal{O}$ usually choose an over-parameterized model with enough capacity, which is trained on both normal samples and watermark samples, with normal ones being dominant in quantity~\cite{adi2018turning, li2019blind}, i.e., $N_2 \ll N_1$.
\ding{183} To ensure the watermark \textbf{integrity}, $\mathcal{O}$ typically crafts watermark samples $(x_{w_i}, y_{w_i})$ that are significantly different from the normal samples in the target class $y_{w_i}$,
whose size $N_2$ cannot be too small as well, to claim the ownership with high confidence~\cite{jia2021entangled, li2019piracy, lukas2022sok}.
\ding{184} To enhance the \textbf{robustness} of the embedded watermark, $\mathcal{O}$ tends to intentionally increase the involvement of the watermark data in the training process.
For instance, both Adi et al.~\cite{adi2018turning} and Lederer et al.~\cite{lederer2023sba} concatenate the independently sampled normal data and watermark data into a data batch during each training iteration.
Jia et al.~\cite{jia2021entangled} intentionally train the model 
on watermark samples and on normal samples alternatively.
Lukas et al.~\cite{lukas2022sok} explicitly boost the ratio of watermark data when constructing the total training set \cite{wmk_robustness}.

\subsection{Threat Model}

\begin{figure}[t]
    \centering
    \includegraphics[width=1\linewidth]{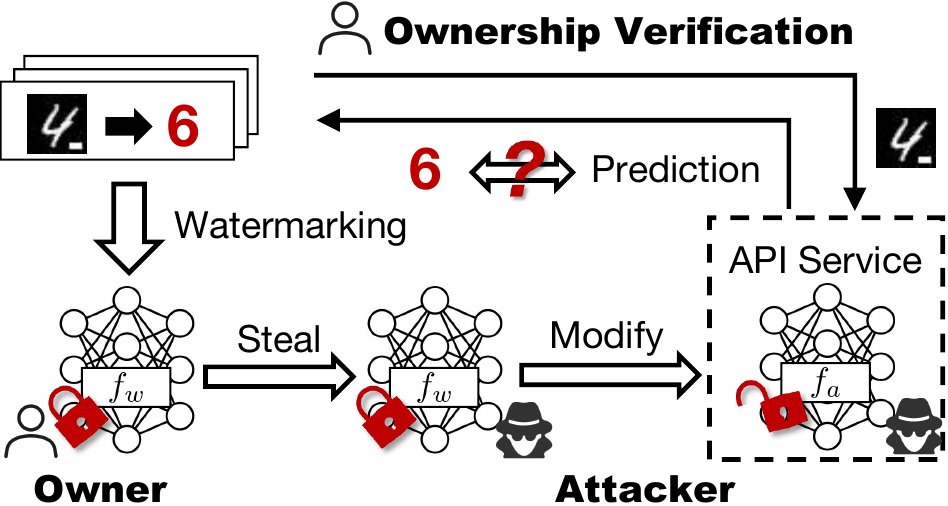}
    \caption{An illustration of our threat model.
    The model owner trains a watermarked model $f_w$ with a secret set of watermark samples, and hopes to perform ownership verification via API queries for suspect models.
    On the other hand, the attacker has managed to acquire the white-box $f_w$, and aims to derive a surrogate model $f_a$ with the watermark removed, evading subsequent ownership verification.}
    \label{fig:atk_setting}
\vspace{-0.1in}
\end{figure}

\noindent \textbf{Attack Scenario.}
In Figure~\ref{fig:atk_setting},
the attacker $\mathcal{A}$ is unwilling to train a model of their own, due to the lack of training data, computation resource or expertise.
Instead, $\mathcal{A}$ attempts to infringe on the intellectual property of $f_w$ trained by $\mathcal{O}$, and then host a similar service illegally or abuse it to make a profit, while evading the subsequent watermark verification process.

Similar to the settings in many watermark robustness studies~\cite{lukas2022sok, lee2022evaluating, libo2021refit, Zhong2022distraction, Aiken2020Laundering}, we assume the attacker has managed to acquire the white-box access to the target model $f_{w}$, i.e., they can observe the model parameters as well as modify them.
\louis{This access could be achieved by, e.g., directly downloading the model after dishonestly agreeing with the official license (on open-source model hubs such as Hugging Face~\cite{hugg}, DNNs might be released under licenses permitting research but prohibiting commercial use~\cite{TheTuringWay}), decrypting the possibly encrypted model from memory (DNNs on end devices) \cite{sun2021mind}, or stealing functional replicas of DNNs via API access (MLaaS) \cite{tramer2016stealing}.}
Additionally, we assume that $\mathcal{A}$ is aware of the existence of the watermark in $f_{w}$, and therefore their goal is to derive a surrogate model $f_{a}$ from $f_{w}$ with similar performance while invalidating the ownership verification.

\noindent \textbf{Attack Budget.}
Specifically, as shown in Table~\ref{tab:atk_req},
\ding{182} $\mathcal{A}$ has no knowledge of the specific watermark algorithm or the watermark data $X_w$ that $\mathcal{O}$ used (i.e., watermark-agnostic).
\ding{183} $\mathcal{A}$ hopes to significantly lower $f_{a}$'s watermark verification accuracy on $X_{w}$, \ding{184} while preserving its clean accuracy on the main task.
To make this attack practical, \ding{185} $\mathcal{A}$ also wants to relax the dataset access requirement as much as possible.



Previous removal attacks may require a substantial subset of the clean training data $X$~\cite{zhang2018protecting, adi2018turning, libo2021refit}, or a considerable size of unlabeled data collected from open sources as proxy data~\cite{Shafieinejad2019regularization, Guo2020PST}.
However, they usually ignore the difficulty of collecting high-quality data at such a scale.
In this work, we mainly consider the attacker with limited data access, as in~\cite{Aiken2020Laundering, Zhong2022distraction}.
The following three assumptions on dataset availability characterize attackers from being limited in knowledge to almost of zero knowledge:
\begin{itemize}
    \item \textbf{In-distribution}: A limited number of correctly labeled samples (either manually by humans, or automatically by the well-training watermarked model) are available from the same distribution of the target model's main task.
    This setting is realistic since these in-distribution data are often hard to acquire.
    \item \textbf{Transfer}: A small amount of unlabeled data from a transfer distribution is available.
    Note that the distribution of these unlabeled data also matters, since finetuning the target model on samples from a completely different distribution might lead to catastrophic forgetting~\cite{kirkpatrick2017forgetting}.
    \item \textbf{Data-free}: No additional data is available.
    This setting further relaxes the dataset access requirement for the attacker. To the best of our knowledge, our work is the first to formally consider data-free black-box watermark removal settings.
\end{itemize}

\section{General Framework of \textsc{Dehydra}}

\begin{figure*}[t]
    \centering
    \includegraphics[width=0.9\textwidth]{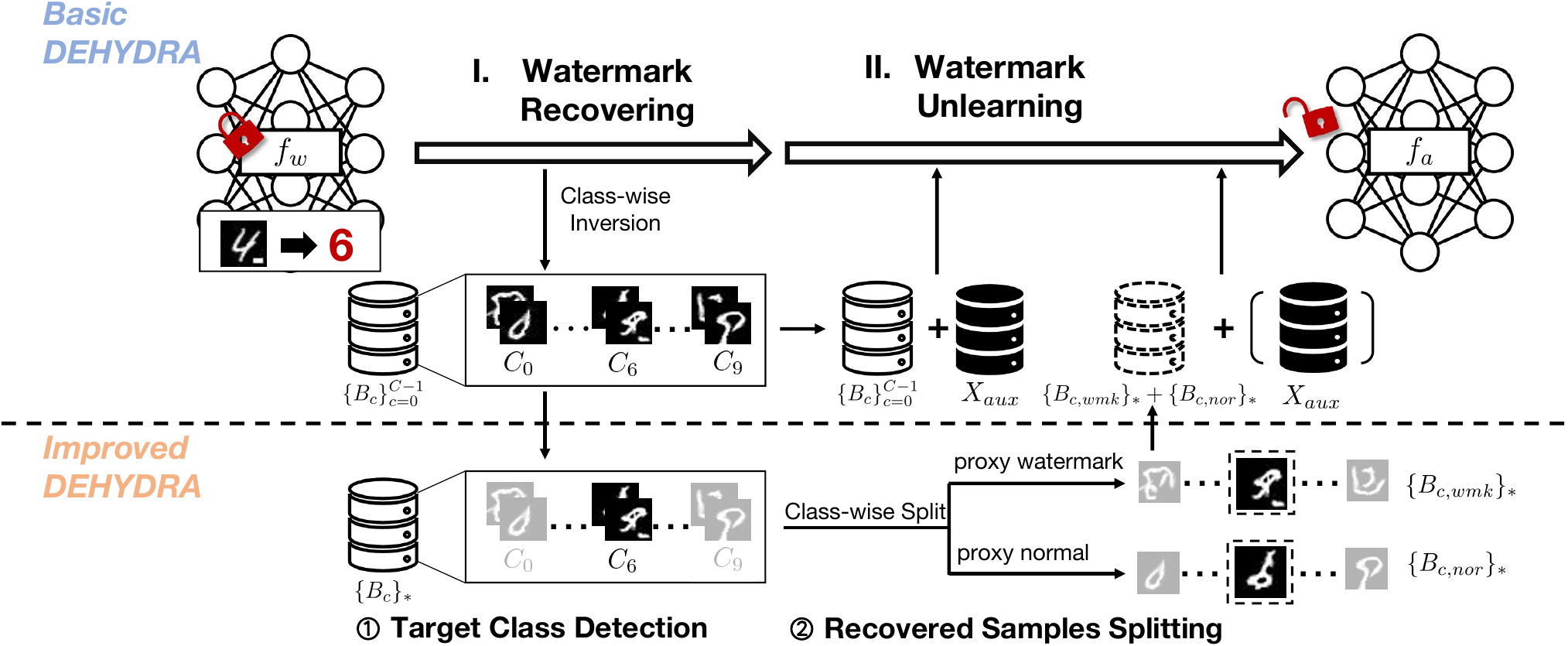}
    \caption{The overview of our \textsc{Dehydra}.
    The upper region shows our basic attack, comprising two stages, watermark recovering (\S\ref{sec:recover}), which reconstructs batch samples $\{B_c\}_{c=0}^{C-1}$ close to real watermark data from each class, and watermark unlearning (\S\ref{sec:unlearn}), which unlearns the recovered samples during finetuning, along with the auxiliary dataset $X_{aux}$.
    The lower region shows our improved designs, target class detection (\S\ref{sec:recover2}) and recovered samples splitting (\S\ref{sec:unlearn2}).
    The former detects the target classes after recovering (e.g., $C_6$ in this case), and only retains batches $\{B_c\}_{\ast}$ from those classes. The latter performs class-wise splitting on each $B_c$ before unlearning, to identify
    samples closer to watermark or normal data, i.e., $\{B_{c,wmk}\}_\ast$ and $\{B_{c,nor}\}_\ast$, respectively.
    }
    \label{fig:main}
\vspace{-0.1in}
\end{figure*}

In general, our \textsc{Dehydra} adopts the following two-stage attack pipeline: \textbf{(1) Watermark Recovering:} We first leverage model inversion technique to recover some samples close to real watermark data (\S\ref{sec:recover}).
\textbf{(2) Watermark Unlearning:} We next deliberately unlearn these samples during the finetuning process, along with the auxiliary dataset $\mathcal{A}$ possesses (\S\ref{sec:unlearn}).
The procedures bear a resemblance to dehydration reactions in chemistry, where water molecules are expelled from compounds, inducing the formation of new substances.
An overview of our general attack framework is shown in the upper region of Figure~\ref{fig:main}.


\subsection{Watermark Recovering}\label{sec:recover}
\sloppy
\lwx{Despite the complexity of various designs of watermarks, including different watermark patterns and strategies for setting the target classes,}
we summarize their commonality that, to implement robustness, existing watermarks exploit the over-parameterization property of DNNs to intentionally memorize the watermark data correlated to the pre-defined target labels.
Note that training samples with higher repetition and memorization levels may face greater reconstruction risks~\cite{carlini2019secret}.
Inspired by this, we believe the model internals should be a sufficient source to reconstruct the watermark information, and hence attempt to use model inversion techniques to reverse-engineer samples close to the real watermark data from a watermarked model.

Concretely, we use optimization-based model inversion to recover the samples hidden in the watermarked model $f_w$.
Given a target class $c$ and a batch of randomly initialized samples $B = \{\hat{x}_i\}_{i=0}^{M-1}$ (\lyf{$\hat{x}_i$ keeps the same shape as the images of $f_w$'s main task,}
while $M$ is the number of samples to be optimized for the $c$-th class,
we try to reverse-engineer samples \lwx{which are similar to the watermark data correlated to} the
target class $c$ by optimizing
\begin{equation}
    \min_{B} (\sum_{\hat{x}_i \in B} \mathcal{L}(f_w(\hat{x}_i), c)) + \mathcal{R}(B), \label{eq:optim}
\end{equation}
where $\mathcal{L}(\cdot)$ is the cross-entropy loss, and the regularizer $\mathcal{R}(\cdot)$ poses our prior knowledge on the watermark data.


\lyf{In addition to classic priors on natural images (e.g., $\ell_2$ regularization and total variation regularization~\cite{mordvintsev2015inceptionism}), our recovering algorithm mainly exploits the information stored in the batch normalization (BN) layers~\cite{ioffe2015BN}, as is done in~\cite{yin2020dreaming, yoon2022fewshot}.
BN layers are widely adopted in mainstream DNNs~\cite{huang2017densenet, szegedy2016inception, he2016res}, and keep useful statistics of the training data, capturing both low-level and high-level features in the neural network.
Since the owner $\mathcal{O}$ tends to intentionally increase the involvement of the watermark data during the training, 
the watermark data $X_w$ are expected to play an important role in the BN statistics of the target model $f_w$.}

To better utilize the watermark information hidden in the BN layers,
we employ another feature regularization term $\mathcal{R}_{bn}(\cdot)$ to guide the recovered samples closer to the real watermark data:
\begin{equation}
    \mathcal{R}_{bn}(B) = \sum_{l=1}^{L-2} \lVert\mu_l(B)-\mu_l\rVert_2^2 + \lVert \sigma_l^2(B) - \sigma_l^2 \rVert_2^2, \label{eq:bn}
\end{equation}
where $L$ is the total number of BN layers in $f_w$, $\mu_l$ and $\sigma_l^2$ are the running batch-wise mean and variance stored in the $l$-th BN layer, and $\mu_l(\cdot)$ and $\sigma_l^2(\cdot)$ are current statistics of feature maps before the $l$-th BN layer.
According to~\cite{simonyan2014very, ilyas2019adversarial}, the last several layers of a DNN capture high-level features, which are more correlated with the model's prediction behavior.
To encourage the exploration in class $c$'s decision space and capture more watermark information, we intentionally dismiss the guidance of the last two BN layers.

In summary, the regularization term in Equation~\ref{eq:optim} is finally:
\begin{equation}
    \mathcal{R}(B) = \alpha_{\ell_2} \mathcal{R}_{\ell_2}(B) + \alpha_{tv} \mathcal{R}_{tv}(B) + \alpha_{bn} \mathcal{R}_{bn}(B),
\end{equation}
where $\mathcal{R}_{\ell_2}(B)$ and $\mathcal{R}_{tv}(B)$ penalize the $\ell_2$ norm and the total variation of the recovered watermark data. The detailed forms can be found in Appendix~\ref{sec:app:nip}.


Since the attacker $\mathcal{A}$ has no knowledge of the underlying target classes, $\mathcal{A}$ has to conservatively
repeat the optimization step in Equation~\ref{eq:optim} for all possible classes and finally obtain $C$ batches of recovered samples $\{B_c\}_{c=0}^{C-1}$, where $B_c$ indicates the inverted batch of samples towards class $c$.
These recovered samples are expected to contain enough watermark information of $f_w$.

Note that although the BN layers keep the statistics of the total training set, including samples of all classes, our watermark recovering method is actually using an aggressive class-wise inversion scheme.
This is because our goal is to \textit{recover samples close to the watermark data as much as possible here, instead of to generate realistic images}.
Since the watermark data may constitute a large contribution to the BN statistics during the training, we propose this scheme to greedily absorb information in BN layers, preventing it from diffusing into other classes. This design is especially effective against watermarks with fixed target classes, compared to the arbitrary-class scheme in conventional inversion methods~\cite{yin2020dreaming}. We validate this point in Section~\ref{sec:inv_comp}.



\subsection{Watermark Unlearning}\label{sec:unlearn}
Directly unlearning these recovered samples $\{B_c\}_{c=0}^{C-1}$ without other constraints may cause catastrophic forgetting, bringing irreversible significant damage to the normal performance.
In the current framework, we consider two data settings for 
$\mathcal{A}$'s auxiliary data $X_{aux}$, in-distribution and transfer. This basic framework is extended to the data-free setting with more improved designs in Section~\ref{sec:improve}.

Assuming the attacker $\mathcal{A}$ has some in-distribution data $X_{aux} = \{(x_i, y_i)\}$ from $f_w$'s main task, they can leverage these labeled samples to preserve model's performance during unlearning.
When $\mathcal{A}$ only has some unlabeled data $\{x_i\}$ from a different distribution, similar goals can be achieved by leveraging model $f_w$ as an oracle to generate pseudo-labels for each sample, yielding the transfer dataset $X_{aux} = \{(x_i, f_w(x_i))\}$.

Starting from $f_w$, the attacker can finetune the surrogate model $f_a$ under the following optimization objective:
\begin{equation}
\begin{aligned}
    \min_{\theta_a} 
    &\underbrace{\sum_{(x_i,y_i)\in X_{aux}} \mathcal{L}(f_a(x_i), y_i)}_
    {\text{preserving original performance}} \\
    + & \underbrace{\alpha_{KL} \sum_{B_c \in \{B_c\}_{c=0}^{C-1}} \sum_{\hat{x}_i \in B_c} KL(f_a(\hat{x}_i), y_{soft}),}_{\text{removing watermark}}
    \label{eq:unlearn}
\end{aligned}
\end{equation}
where $\theta_a$ is the parameters of $f_a$, $KL(\cdot)$ is the Kullback–Leibler divergence, 
$y_{soft}$ is a soft unlearning target of length $C$, where each element equal to $1/C$.
Note that we are using a more consistent unlearning target, compared to the hard random labels, which might lead to gradient conflicts and cancellations during fine-tuning.

The first term in Equation~\ref{eq:unlearn} stabilizes the general prediction behavior of $f_a$ and thus preserves its performance, while the second term encourages $f_a$ to unlearn the watermark-related information.
In this way, we successfully derive a surrogate model $f_a$ to prevent successful watermark verification.

\section{Improved Designs for \textsc{Dehydra}}\label{sec:improve}
In this section, we enhance the general framework of \textsc{Dehydra} in the following directions.
\begin{enumerate}
\item \textbf{Improved Recovering via Target Class Detection}: First, existing black-box watermarks consist of those with fixed target classes and with non-fixed target classes.
For watermarks with fixed target classes (such as \textit{Content}~\cite{zhang2018protecting} and \textit{EWE}~\cite{jia2021entangled}), recovering and unlearning watermark data from other classes bring no benefit to watermark removal, and can even sacrifice the utility of the surrogate model. Therefore, after the original watermark recovering stage, we further distinguish whether the underlying watermark has a fixed target class, and detect its target class if so.
We only retain recovered samples belonging to the target class for a fixed-class watermark (\S\ref{sec:recover2}). 

\item \textbf{Improved Unlearning via Recovered Sample Splitting}: Second, the inverted samples derived by Equation~\ref{eq:optim} may also contain important information of the main task.
Indiscriminately unlearning all the inverted samples from the target classes might also hurt the model utility. Therefore, before the original watermark unlearning stage, we split out samples closer to the real watermark data from the recovered ones and improve the finetuning loss accordingly (\S\ref{sec:unlearn2}).
\end{enumerate}


Figure~\ref{fig:main} shows two improved designs and the complete workflow. The two improvements mainly exploit the recovered samples instead of the auxiliary data, enabling our attack to be extended to a data-free setting for watermarks with a fixed target class. 

\subsection{Target Class Detection}\label{sec:recover2}
\noindent$\bullet$\textbf{ Insight: Watermark Smoothness Discrepancy.}
By inspecting existing DNN watermarking implementations \cite{zhang2018protecting, adi2018turning, lukas2022sok, jia2021entangled, lederer2023sba}, we find that for a watermark with a fixed target class, the model tends to learn much more diverse data correlated to the target class during training, compared with other labels, due to the additional incorporation of special watermark samples (for the integrity purpose).
Moreover, the model also learns samples belonging to the target class more prominently, because of the higher participation ratio of these watermark data (for the robustness purpose).

Based on these observations, we propose the \textit{watermark smoothness discrepancy} hypothesis that, for a DNN model embedded with a fixed-class watermark, the loss landscape (i.e., the structure of loss values around the parameter space) is smoother with respect to samples of the target label, compared to those of other labels.
The smoothness property is mathematically defined via the Hessian matrix of loss function~\cite{li2018visualizing, ge2023boosting}, which is usually computationally expensive. We instead approximate the smoothness using the performance of model under perturbations. 
A conceptual illustration is shown in Figure~\ref{fig:smoothness_illustration}, where compared with the clean model, the watermarked model demonstrates higher smoothness on Class A and lower on Class B, due to the additional incorporation of watermark samples and the increased involvement during training.

\begin{figure}[t]
\centering
\includegraphics[width=\linewidth]{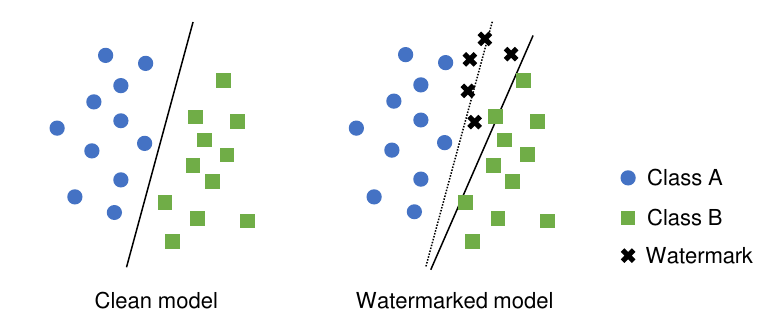}
\vspace{-0.2in}
\caption{An illustration of the watermark smoothness discrepancy hypothesis.
Left: A clean model with its decision boundary.
Right: A fixed-class watermark is embedded, with all watermark samples labeled to Class A.}
\label{fig:smoothness_illustration}
\vspace{-0.2in}
\end{figure}

\noindent$\bullet$ \textbf{Theoretical Justification.}
To provide theoretical evidence, we consider training a watermarked binary classification linear model $f(x) = \text{sign} (\langle \boldsymbol{w}, \boldsymbol{x} \rangle +b)$ on a dataset following mixture Gaussian distributions, together with another set of fixed-class watermark data.
\louis{The detailed problem definition, theorems and proofs are all deferred to Appendix~\ref{sec:proof} due to the page limit.}

Our conclusions are (Remark on Theorem~\ref{th:wm_smooth_complete}):
\textit{For a fixed-class watermark, the target class will exhibit a higher smoothness (both input-level and parameter-level) compared to other classes if the watermark samples are (1) diverse internally and different from the normal ones in the target class, and (2) sampled much more frequently.
During proof, we also find the additive input-level and parameter-level noise would further enlarge the class-wise discrepancies.}

In existing black-box watermarking implementations, the designs of watermark data for the integrity purpose correspond to the first condition, while the training techniques for the robustness purpose satisfy the second condition.
Therefore, we deduce that the watermark smoothness discrepancy hypothesis should also hold for realistic watermarked DNNs.

Note that another line of work focuses on class-wise robustness discrepancies \cite{tian2021analysis, benz2021robustness} and empirically discovers that, for a normal model trained on a balanced dataset, classes with smaller \textit{inter-class} semantic distances are more vulnerable.
Our findings are exactly complementary to theirs:
the classes with a larger \textit{intra-class} variance and sampling ratio should be more robust.





\noindent$\bullet$ \textbf{A Pilot Study.}
To provide a more comprehensive investigation, we empirically evaluate the smoothness on ten realistic models protected by different black-box watermarks.
Following the previous interpretations of smoothness, we define a metric called $SmoothAcc$.
Given a set of samples $X$, we calculate the average prediction accuracy of $(x_i, y_i) \in X$ under perturbations:
\begin{equation}
\begin{aligned}
    &SmoothAcc(X) = \\
    \frac{1}{|X|}& \sum_{(x_i, y_i)\in X} \mathbb{P}(
    \operatorname*{argmax}_j {f_w(x_i+\epsilon_1;\theta_w+\epsilon_2)}_j =y_i),
    \label{eq:smoothacc}
\end{aligned}
\end{equation}
where $\epsilon_1 \in \mathcal{N}(0, \sigma_1^2 I_1)$ is the input-level Gaussian noise and $\epsilon_2 \in \mathcal{N}(0, \sigma_2^2 I_2)$ is the parameter-level Gaussian noise.

We construct ten watermarked ResNet-18~\cite{he2016res} models on CIFAR-10~\cite{cifar10} dataset, embedded with five fixed-class watermarks (with the target class set to 6\footnote{The target classes of the five investigated fixed-class watermarks are set to 6 arbitrarily in this pilot study. We use a randomly chosen label under ten repetitive tests for a systematic study in Section~\ref{sec:exp-target-detection}.}) and five non-fixed-class watermarks respectively (more details are described in Section~\ref{sec:Experimental Settings}).
For each model, to comprehensively evaluate its smoothness on various data categories, we collect 11 batches of samples, including ten batches of normal samples, each from a class in CIFAR-10, and one batch from real watermark samples, with a batch size of 100.
During smoothness analysis, we set the standard deviations of the Gaussian noise $\sigma_1 = 0.5$ and $\sigma_2 = 0.015$\footnote{We at first empirically set the standard deviations of the input-level and parameter-level Gaussian noise. Subsequently, we independently increment each standard deviation until the smoothness discrepancy is evident.}, and repeatedly sample the Gaussian noise 100 times for each sample to approximate the possibility $\mathbb{P}(\cdot)$ in Equation~\ref{eq:smoothacc}.
The results are shown in Figure~\ref{fig:pilot_study}(a).

\begin{figure*}[tb]
    \centering
    \includegraphics[width=\textwidth]{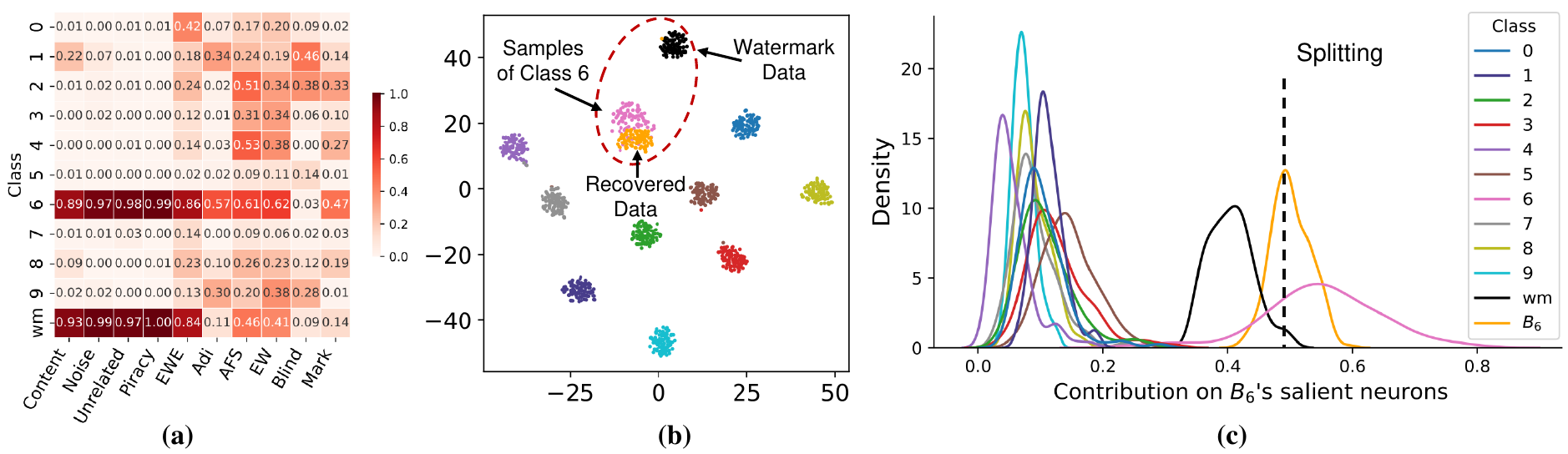}
    \vspace{-0.25in}
    \caption{Pilot study results of the improved designs. (a) Class-wise smoothness analysis of the five fixed-class and five non-fixed-class watermarks on CIFAR-10. (b) Activations visualization of a model protected by the \textit{Content} watermark with target class 6. (c) Distribution of sample contributions per class, on the salient neurons of the recovered samples $B_6$.}
    \label{fig:pilot_study}
\vspace{-0.1in}
\end{figure*}

Evidently, for fixed-class watermarks, the models exhibit much higher smoothness on samples from the target class, including normal samples and watermark samples.
This class-wise difference is less evident for non-fixed-class watermarks, 
where models achieve moderate smoothness on several classes simultaneously.
These results strongly support our hypothesis.


\noindent$\bullet$\textbf{ Technical Designs.}
Based on the observations above, we assume that the watermark smoothness discrepancy phenomenon also exists, when the target model $f_w$ is embedded with a fixed-class watermark, in the class-wise inverted samples $\{B_c\}_{c=0}^{C-1}$.
Supposing the fixed target label is $y_w$, then the class-wise smoothness difference would also be significant in $\{B_c\}_{c=0}^{C-1}$ as the inverted batch $B_{y_w}$ from the target label will cover the space of both normal samples of $y_w$ and watermark samples.

In light of this, we propose to evaluate the smoothness of $f_w$ on $\{B_c\}_{c=0}^{C-1}$ after the watermark recovering, i.e., to calculate $SmoothAcc(B_c)$ for $c \in \{0, 1, \dots, C-1\}$ respectively.
Each $\hat{x}_i \in B_c$ is temporarily labeled as class $c$ during the smoothness evaluation. Next, we sort the batches $\{B_c\}_{c=0}^{C-1}$ according to their $SmoothAcc(B_c)$ values in the descending order, yielding sorted $\{B_{s_0}, B_{s_1}, \dots, B_{s_{C-1}}\}$.
We detect whether $f_w$ is embedded with a fixed-class watermark, and, if so, follow the criterion below to determine the target class: \textbf{a)} If the gap between the recovered two batches evaluated with the highest smoothness, i.e., $SmoothAcc(B_{s_0}) - SmoothAcc(B_{s_1})$, is larger than a certain threshold $T$, we regard the underlying watermark as a fixed-class watermark, and take the target label as $s_0$.
    We then only retain recovered samples $B_{s_0}$ and drop other batches. \textbf{b)} Otherwise, we regard the underlying watermark as a non-fixed-class watermark, and keep all the recovered batches $\{B_c\}_{c=0}^{C-1}$ since each class $c$ might have several correlated watermark samples.
    We additionally record labels of the last two batches, $s_{C-1}$ and $s_{C-2}$, as the least-likely label and second-least-likely label respectively.

In the remainder of this paper, we use $\{B_c\}_{\ast}$ to denote $\{B_{s_0}\}$ if a fixed-class watermark is detected, and $\{B_c\}_{c=0}^{C-1}$ otherwise.
Consequently, after the watermark recovering and target class detection, 
we get batch(es) samples $\{B_c\}_{\ast}$ close to real watermark data $X_w$.

\subsection{Recovered Samples Splitting}\label{sec:unlearn2}
\noindent$\bullet$\textbf{ Insight: Normal Data Dominance.}
We have previously explained the necessity of the class-wise inversion scheme for watermark recovering (\S\ref{sec:recover}), but we find that this scheme would induce competitive objectives during recovering.
Recall the two key components of the recovering objective in Equation \ref{eq:optim}, i.e. the targeted classification loss $\sum_{\hat{x}_i \in B} \mathcal{L}(f_w(\hat{x}_i), c)$ and the BN regularization term $\sum_{l=1}^{L} \lVert\mu_l(B)-\mu_l\rVert_2^2 + \lVert \sigma_l^2(B) - \sigma_l^2 \rVert_2^2$.
For the recovering process on target class $c$, the targeted classification loss will lead all the optimized samples $B_c$ consistently classified to $c$, while the BN regularization term will guide the neuron activations of $B_c$ statistically close to the activations average of the total training set.
Considering that the normal samples constitute the main distribution, different from the watermark distribution, the recovered $B_c$ should be closer to class $c$'s clean data (which belongs to the main distribution) than the watermark samples labeled to $c$.




Based on the above analysis, we derive another hypothesis named \textit{normal data dominance}, which is intrinsically connected with our class-wise inversion scheme.
Specifically, for a watermarked model (with either fixed or non-fixed target classes), 
the class-wise recovered samples are expected to be closer to normal samples compared with the watermark samples in that class.

\noindent$\bullet$ \textbf{A Pilot Study.}
To gain insights, we analyze the neuron activations of the model protected by \textit{Content}~\cite{zhang2018protecting} watermark here with fixed target class 6.
We first collect 12 batches, including ten batches of normal samples from each class, one batch of watermark samples and one batch of recovered samples $B_6$ from target class 6, each with a batch size 100.
We feed each batch into the watermarked model, extract the activations of the penultimate layer and perform t-SNE dimensionality reduction~\cite{van2008visualizing}.
As shown in Figure~\ref{fig:pilot_study}(b), the recovered samples (the orange cluster) lie slightly close to the global activations center, and hence closer to the normal samples of class 6 (the pink cluster) than the watermark samples (the black cluster, lying separately from the main distribution).
This is because, the watermark task is usually irrelevant to the model's primary task, making it extremely challenging to recover high-quality watermark samples.
Although we have employed the class-wise BN regularization term to enhance the coverage of the recovered $B_6$ over the real watermark samples (\S\ref{sec:recover}), it would also inevitably guide the recovered samples towards the activation average of the total training set, conforming to our analysis above.



Next, we delve into the composition of the activations of the recovered samples $B_6$.
Since $B_6$ are pulled statistically close to the activation center of training data of all classes, it is expected that samples of $B_6$ will inevitably activate redundant neurons of class $c$ or neurons important to other classes.
To prevent the noise from these irrelevant activations, we focus on $B_6$'s salient neurons.
Concretely, we utilize the \textit{importance score} for recognizing the salient neurons ~\cite{Chen2022LinkBreakerBT}. Given a batch of samples $X_c$ of class $c$, the score extracts the neuron activations at the $l$-th layer of model $f_w$, i.e., ${f_w}_{(l)}(X_c)$, and calculates an importance score for each $j$-th neuron: $Impt({f_w}_{(l)}(X_c)_j) = \frac{\mu_j}{\sigma_j}$,
where $\mu_j$ and $\sigma_j$ are the mean and the standard deviation of the $j$-th neuron estimated over $X_c$.
We refer to the neurons with top-5\% importance score as salient neurons of  $X_c$, denoted as $SN(X_c)$.

Next, we capture the salient neurons of the recovered samples $B_6$ and explore the functionality of these consistently activated neurons.
For each sample $x_i \in X_c$ among the 12 batches collected above, we define its total activation values on the salient neurons of $B_6$ as its contribution:
\begin{equation}
    Contribution(x_i) = \sum_{j\in SN(B_6)} {f_w}_{(l)}(x_i)_j.
\end{equation}
Then we plot the distribution of sample contributions of each class, distinguished in color.
In Figure~\ref{fig:pilot_study}(c), 
for samples belonging to class 6, normal samples contribute the most, followed by samples in $B_6$ itself, and finally the watermark samples.
This indicates that $SN(B_6)$ are more consistently activated by normal samples of class 6 than the watermark samples,
which validates our hypothesis.

More empirical evidence is provided in Figure~\ref{fig:neuron_cross} of Appendix~\ref{app:salient_neuron}, where, in most cases, $SN(B_6)$ exhibit the highest correlation with the normal samples of class 6.
It is worth noting, for non-fixed-class watermarks, $SN(B_6)$ also has a high correlation with the watermark samples besides the normal samples, possibly due to the more complex input-output pairing relations, leaving the watermark data more entangled with the normal data.




\noindent$\bullet$\textbf{ Technical Designs.}
As the class-wise recovered samples $B_c$ might also contain important information of the normal samples in class $c$, in this section, we exploit the hypothesis of normal data dominance to split $B_c$ into two parts, namely proxy normal data and proxy watermark data.
We develop a neuron-level criterion: For each class $c$, we extract the salient neurons $SN(B_c)$, and take the subset achieving the largest contributions as proxy normal data, leaving the rest as proxy watermark data.
Algorithm~\ref{alg:split} in Appendix~\ref{app:algo} shows the detailed procedures.

Next, we formally discuss the underlying rationale.
An illustrative example is shown in Figure~\ref{fig:pilot_study}(c), where we split the recovered samples $B_6$ (the orange distribution) according to their contributions on $SN(B_6)$.
After splitting, the right part of $B_6$ are taken as proxy clean data, which are close to the normal samples of class 6 and preserve the normal functionalities specific to class 6.
The left part of $B_6$ are proxy watermark data, either correlating to the real watermark data, or noisily activating other classes' salient neurons (which are inessential for normal performance).

Given the recovered samples $\{B_c\}_{\ast}$ obtained via watermark recovering and target class detection, the algorithm will perform class-wise splitting on each batch $B_c$, returning proxy normal data $\{B_{c,nor}\}_\ast$ and proxy watermark data $\{B_{c,wmk}\}_\ast$.
Finally, we improve the finetuning loss accordingly.

If the target model is detected with a fixed-class watermark (i.e., the target class $s_0$), we only need to address samples from class $s_0$, denoted as $\{B_{c, wmk}\}_\ast = \{B_{s_0, wmk}\}$ and $\{B_{c, nor}\}_\ast = \{B_{s_0, nor}\}$.
In this way, the attacker can finetune the surrogate model $f_a$ under the following objective:
\begin{equation}
\begin{aligned}
    \min_{\theta_a} 
    &\underbrace{\left( \sum_{(x_i,y_i)\in X_{aux}} \mathcal{L}(f_a(x_i), y_i) \right) + 
    \sum_{\hat{x}_i \in B_{s_0, nor}} \mathcal{L}(f_a(\hat{x}_i), s_0)}_{\text{preserving original performance}} \\
    & + \underbrace{\sum_{\hat{x}_i \in B_{s_0, wmk}} \mathcal{L}(f_a(\hat{x}_i), \text{rand}_{s_0}),}_{\text{removing watermark}}
    \label{eq:unlearn_fix}
\end{aligned}
\end{equation}
where $\text{rand}_{s_0}$ denotes a random label in $\{0,1,\dots,C-1\}$ except $s_0$ itself.
This unlearning target is stronger than minimizing $KL(\cdot, y_{soft})$ in Equation~\ref{eq:unlearn}, due to the enhanced specificity brought by target class detection, and also better than maximizing $\mathcal{L}(\cdot, s_0)$, due to the easier convergence.
Note that in the objective function above, we additionally exploit the proxy normal data $B_{s_0,nor}$ to preserve $f_a$'s original performance.
This design enables our \textsc{Dehydra} to be extended to a data-free setting: when the auxiliary data is not available (i.e., the first term enclosed by $\left(\cdot\right)$ in the above equation is optional), we can still perform watermark unlearning only using the recovered samples.

For models detected with a non-fixed-class watermark, $\{B_{c, wmk}\}_\ast$ contains $C$ batches of proxy watermark data. 
Then $\mathcal{A}$ can finetune $f_a$ under this objective:
\begin{equation}
\begin{aligned}
    \min_{\theta_a}
    &\underbrace{\sum_{(x_i,y_i)\in X_{aux}} \mathcal{L}(f_a(x_i), y_i)}_{\text{preserving original performance}} + \\
    &\underbrace{\sum_{B_{c, wmk} \in \{B_{c, wmk}\}_\ast} \sum_{\hat{x}_i \in B_{c, wmk}} \mathcal{L}(f_a(\hat{x}_i), y_\star)
    ,}_{\text{removing watermark}}
    \label{eq:unlearn_nonfix}
\end{aligned}
\end{equation}
where $y_\star$ denotes the least-likely label $s_{C-1}$ (obtained during the target class detection process) for $B_{c, wmk} \in \{B_{c, wmk}\}_\ast$ and $c \neq s_{C-1}$, while the second-least-likely label $s_{C-2}$ when $c=s_{C-1}$.
This proxy label unlearning target is also stronger than that in Equation~\ref{eq:unlearn} due to the enhanced specificity, while maintaining the consistent descending gradients across the labels.
Different from the design for fixed-class watermarks, we do not use proxy normal data to further preserve the model performance.
This is because the normal data dominance phenomenon is less evident for non-fixed-class watermarks (Figure~\ref{fig:neuron_cross}), and thus the salient neurons of the split proxy normal data might still have some intersection with those of the real watermark data.
Therefore, leveraging these proxy normal data for preserving utility might hinder the watermark removal.


\section{Experiments}

\subsection{Overview of Evaluation} \label{sec:Experimental Settings}
To evaluate the performance of \textsc{Dehydra}, we perform a comprehensive study on ten mainstream black-box watermarking schemes, under various benchmark datasets, DNN architectures and data availability settings.
Before presenting the detailed evaluation results, we first provide a concise introduction to the experimental setups.

\noindent$\bullet$ \textbf{Datasets and Victim Models.}
Following the settings in~\cite{jia2021entangled, lee2022evaluating, Shafieinejad2019regularization, Aiken2020Laundering}, we construct victim models on three benchmark datasets, namely, MNIST~\cite{mnist}, CIFAR-10 and CIFAR-100~\cite{cifar10}. The respective DNN architectures are LeNet-5~\cite{lecun1998lenet}, ResNet-18~\cite{he2016res} and ResNet-34.
We embed black-box watermarks into the victim models during the training. Then we perform watermark removal attacks, including \textsc{Dehydra} and baseline attacks, to evaluate their effectiveness.

\noindent$\bullet$ \textbf{Target Watermark Schemes.}
Our evaluation covers ten mainstream black-box DNN watermarking schemes published at top-tier conferences (some of which are developed by industry leaders such as IBM): \textit{Content}, \textit{Noise}, \textit{Unrelated}~\cite{zhang2018protecting}, \textit{Piracy}~\cite{li2019piracy}, \textit{EWE}~\cite{jia2021entangled}, \textit{Adi}~\cite{adi2018turning}, \textit{AFS}~\cite{le2020afs}, \textit{EW}~\cite{namba2019ew}, \textit{Blind}~\cite{li2019blind}, \textit{Mark}~\cite{guo2018mark}.
These schemes feature diverse designs in the watermark data and target label settings, and are mostly evaluated in watermark robustness surveys \cite{lukas2022sok, lee2022evaluating, lederer2023sba}.
For more backgrounds, please refer to Appendix~\ref{sec:back_impt_wms}.

During implementation, we strictly followed the specifications and hyperparameters in the original papers (with the target class set to 6 for those fixed-class watermarks) to prepare the watermarked models. We also referred to some recent open-source replications \cite{lederer2023sba, lukas2022sok} to ensure our implementation is faithful.
For more implementation details and the performance of the watermarked models, please refer to Appendix~\ref{app:impl-wm} and~\ref{app:target-models}.

\noindent$\bullet$ \textbf{Baseline Removal Attacks.}
We compare our \textsc{Dehydra} with the following three types of removal attacks, comprising six baseline attacks in total.
\begin{itemize}
    \item \textit{Pruning-based attacks:}
    \textbf{(1) \textit{Pruning}~\cite{zhang2018protecting, uchida2017embedding}} directly sets a proportion of DNN parameters with the smallest absolute values to zero.
    \textbf{(2) \textit{Fine-pruning}~\cite{liu2018finepruning}} prunes neurons that are infrequently activated by normal data, followed by a finetuning process.
    \item \textit{Finetuning-based attacks:}
    \textbf{(3) \textit{Finetuning}~\cite{libo2019leveraging, libo2021refit}} specifically finetunes the target model using a large learning rate, together with a carefully-designed scheduler.
    \textbf{(4) \textit{Regularization}~\cite{Shafieinejad2019regularization, lukas2022sok}} finetunes the target model with a large L2 regularization on the parameters.
    \textbf{(5) \textit{Distraction}~\cite{Zhong2022distraction}} finetunes the target model on in-distribution or transfer data, together with another set of lure data that distracts the model's attention away from the watermark.
    \item \textit{Unlearning-based attacks:} \textbf{(6) \textit{Laundering}~\cite{Aiken2020Laundering}} leverages trigger reverse-engineering methods~\cite{wang2019NC} in backdoor defense literature to recover watermark data, followed by neuron resetting and model retraining, to remove backdoor-based watermarks.
\end{itemize}
The implementation details of these baseline attacks are clarified in Appendix~\ref{app:removal}.

\noindent$\bullet$ \textbf{Dataset Availability Settings.}
We mainly focus on the realistic settings where the attacker has limited data access, including three different scenarios as follows:
\textbf{(1) In-distribution setting.} The attacker has 1000 correctly-labeled normal samples (2\% of the size of the CIFAR-10 training set), following~\cite{Zhong2022distraction}.
\textbf{(2) Transfer setting.} Here the attacker cannot access the source training data, but is assumed to have 2000 unlabeled samples from another distribution (e.g., CIFAR-100 for target models trained on CIFAR-10, following~\cite{Guo2020PST, Zhong2022distraction}).
\textbf{(3) Data-free setting.} In this setting, the attacker cannot access any samples.

\noindent$\bullet$ \textbf{Implementation of \textsc{Dehydra}.}
During watermark recovering, we set hyper-parameters $M=250, \alpha_{\ell_2}=0.01, \alpha_{tv}=0.03, \alpha_{bn}=0.1$. We use $tanh(\cdot)$ to constrain the batch data within a valid range, and optimize the recovering objective with the Adam optimizer of learning rate 0.1.
For target class detection, we set $\sigma_1=1.0, \sigma_2=0.03$ for MNIST and $\sigma_1=0.5, \sigma_2=0.015$ for CIFAR-10 and CIFAR-100 respectively.
The detection threshold $T$ is conservatively set to $0.4$ for MNIST and CIFAR-10, while $0.3$ for CIFAR-100.
For splitting the recovered samples, we set $l$ to be the penultimate layer of the target model, and saliency ratio $\beta=0.95$, split ratio $\gamma=0.5$ in Alg.\ref{alg:split}.
During watermark unlearning, we set $\alpha_{KL}=15$ to ensure the unlearning strength for the basic \textsc{Dehydra} and the uniform loss weights for the improved \textsc{Dehydra}.
The model is finally finetuned for $10$ epochs using the SGD optimizer with the batch size $128$.
The learning rate is set to $0.01$ when auxiliary data is available and $0.003$ when data-free. Noteworthily, the settings above are empirically chosen and lead to strong attack effectiveness uniformly over almost all the covered watermarking schemes, which supports the watermark-agnostic nature of our \textsc{Dehydra}.

\begin{table*}[!tb]
\caption{Comparison of removal attacks against black-box DNN watermarks under the in-distribution setting on CIFAR-10.
x / y denotes the clean accuracy / rescaled watermark accuracy, and values behind ± report their standard deviations respectively in 5 repetitive tests.
The rescaled watermark accuracy values below 50\% are bolded and underlined.
The left five columns are attack results of the fixed-class watermarks, while the right five are of the non-fixed-class watermarks.
}
\label{tab:in_distribution}
\vspace{-0.05in}
\centering
\small
\setlength{\tabcolsep}{2.5pt}
\begin{tabular}{lcccccccccc}
\hline
\textbf{Attacks} & \textbf{\textit{Content}} & \textbf{\textit{Noise}} & \textbf{\textit{Unrelated}} & \textbf{\textit{Piracy}} & \textbf{\textit{EWE}} & \textbf{\textit{Adi}} & \textbf{\textit{AFS}} & \textbf{\textit{EW}} & \textbf{\textit{Blind}} & \textbf{\textit{Mark}} \\
\hline
\textit{None}    & 93.8 / 100.0 & 94.1 / 100.0& 93.7 / 100.0& 93.3 / 100.0 & 94.0 / 100.0& 93.9 / 100.0& 93.0 / 100.0& 92.9 / 100.0& 91.2 / 100.0 & 94.0 / 100.0 \\
\hline
\textit{Fine-pruning} & 86.4 / \underline{\textbf{44.7}}& 87.4 / 67.0& 85.9 / 52.6& 69.5 / \underline{\textbf{37.8}}& 88.9 / 70.2& 85.4 / 59.8& 87.3 / 80.4& 87.6 / 83.3& 77.0 / 53.8& 87.3 / 61.1 \\
\textit{Finetuning}   & 85.5 / \underline{\textbf{46.4}}& 85.1 / 84.9& 84.6 / \underline{\textbf{47.1}}& 66.1 / \underline{\textbf{40.9}}& 86.9 / 81.2& 85.0 / 56.3& 90.4 / 99.4& 90.1 / 98.8& 79.9 / 55.0& 85.8 / 57.1 \\
\textit{Regularization}   & 60.3 / 100.0& 54.7 / 91.5& 82.8 / 100.0& 76.9 / 100.0& 68.2 / \underline{\textbf{47.6}}& 55.4 / 73.8& 72.3 / 81.6& 69.0 / 82.1& 69.8 / 83.6& 75.7 / 87.2 \\
\textit{Distraction}   & 87.4 / \underline{\textbf{47.5}}& 89.5 / 98.1& 88.5 / 51.5& 70.6 / \underline{\textbf{40.9}}& 85.8 / 52.4& 89.4 / 52.2& 92.5 / 99.4& 92.5 / 100.0& 89.8 / 91.2& 86.0 / 51.3 \\
\textit{Laundering}   & 91.1 / \underline{\textbf{48.1}}& 90.7 / 77.4& 88.8 / 100.0& 83.8 / 69.9& 91.1 / 70.2& 87.0 / 69.7& 89.3 / 100.0& 89.3 / 98.8& 86.3 / 73.1& 91.1 / 94.8 \\
\hline
\textsc{Dehydra}\textsubscript{Basic}  & 88.8 / \underline{\textbf{44.7}}& 88.1 / \underline{\textbf{5.7}}& 88.1 / \underline{\textbf{44.9}}& 83.3 / \underline{\textbf{44.0}}& 89.0 / \underline{\textbf{48.2}}& 84.1 / 51.6& 86.4 / \underline{\textbf{44.9}}& 85.2 / 52.1& 84.1 / 58.5& 83.0 / 50.7 \\


\textsc{Dehydra}\textsubscript{Improved}  & 93.1 / \underline{\textbf{45.8}}& 93.1 / \underline{\textbf{5.7}}& 92.7 / \underline{\textbf{44.9}}& 91.4 / \underline{\textbf{41.5}}& 93.3 / \underline{\textbf{47.6}}& 86.0 / \underline{\textbf{49.3}}& 88.1 / \underline{\textbf{44.3}}& 88.2 / \underline{\textbf{46.3}}& 90.1 / \underline{\textbf{46.2}}& 88.5 / \underline{\textbf{47.2}} \\

\hline
\end{tabular}
\vspace{-0.05in}
\end{table*}

\noindent$\bullet$ \textbf{Evaluation Metrics.}
We primarily focus on the utility and the watermark retention of the surrogate models obtained via removal attacks. Specifically, we use \textit{clean accuracy}, i.e., the classification accuracy on the test set, to measure the utility, and \textit{watermark accuracy}, i.e., the ratio of watermark samples correctly classified as target labels, to measure the watermark retention.
Further, we follow Lukas et al.~\cite{lukas2022sok} to determine the decision threshold $\theta$ for each watermark on $20$ clean models, and then calculate a metric $S(\cdot;\theta)$ called the \textit{rescaled watermark accuracy}.
The former $\theta$ is used to distinguish the watermark accuracy of the watermarked model from those independently trained models with high confidence, while the latter $S(\cdot;\theta)$ enables a fair robustness comparison over different watermarking schemes.
Specially, the rescaling function is defined as
$S(x;\theta)=\max (0, \frac{1-\theta^\prime}{1-\theta}x+\frac{\theta^\prime-\theta}{1-\theta})$,
linearly rescaling the watermark accuracy $x$ based on the decision threshold $\theta$.
By manually picking a rescaled decision threshold $\theta^\prime=0.5$, a watermark is said to be removed
if the rescaled watermark accuracy is lower than $50\%$.
Table~\ref{tab:target_models} in Appendix~\ref{app:target-models} summarizes our estimated decision threshold for the ten black-box watermarking schemes.

We define a removal attack as successful (i.e., cracks the target watermark scheme) if the rescaled watermark accuracy is below 50\% and the surrogate model maintains at least 90\% of the original utility.
Note that this criterion is slightly different from~\cite{lukas2022sok} where a maximum $5\%$ utility loss is used.
This is mainly because we consider the more realistic data-limited settings (e.g., $2\%$ of the source training set in most experiments), while~\cite{lukas2022sok} assume the attacker has access to over $30\%$ of the training set.
Nevertheless, we also evaluate the attack performance when more data is available in Section \ref{sec:discussion}.


\subsection{Attack Performance} \label{sec:attack-performance}
The comparison with baseline removal attacks is organized according to the data settings where the baseline attacks are applicable. Due to the page limit, we mainly present and analyze the attack results on CIFAR-10, while similar results are observed on MNIST and CIFAR-100. Appendix~\ref{app:atk-mnist} and~\ref{app:atk-cifar100} present the omitted results.



\noindent\textbf{(1) In-distribution Setting.}
Table~\ref{tab:in_distribution} presents the attack results by performing different removal attacks against the ten investigated watermarks under the in-distribution setting.
Both the basic \textsc{Dehydra} and improved \textsc{Dehydra} significantly lower the rescaled watermark accuracy of surrogate models, among which the improved \textsc{Dehydra} achieves comprehensive watermark removal against the ten investigated black-box watermarks, with at least 90\% utility preserved (six in ten even with a clean accuracy degradation in 2\%).

The baseline removal attacks are only effective against a subset of watermark schemes.
For instance, \textit{Laundering} is more effective against watermarks similar to backdoors, such as \textit{Content}, \textit{Piracy} and \textit{EWE}, but less effective against other watermarks, because of its dependence on the trigger reverse-engineering method to detect and remove the underlying watermarks.
\textit{Fine-pruning} and \textit{Finetuning} are less effective against \textit{EWE}, \textit{Adi} and \textit{AFS}, possibly due to the more entangled learned representations of the watermark data.
\textit{Distraction} can also only compromise about half of these watermarks' robustness, while being less powerful against watermarks of other forms.
In contrast, our basic and improved \textsc{Dehydra} are both effective regardless of the target watermark schemes, due to the capture of the black-box watermarks' commonality.
Note that our improved \textsc{Dehydra} can effectively remove the \textit{Piracy} watermark, despite the fact that the number of its watermark data (1\% of the training data, i.e., 500, following the original paper) is much larger than the number of unlearned samples in our improved \textsc{Dehydra} ($\gamma\times M=125$).
This indicates that the removal effect of our approach is achieved not just through a sample-level unlearning, but also through a deeper neuron-level fixing.

Additionally, these baseline attacks might impact the surrogate model's utility seriously.
For instance, the finetuning-based attacks, such as \textit{Finetuning}, \textit{Regularization} and \textit{Distraction}, incur a large clean accuracy degradation against \textit{Noise} and \textit{Piracy} watermarks, possibly due to their overly offensive finetuning strategies.
However, our \textsc{Dehydra} attacks could still preserve the surrogate model's performance besides removing the watermark, because of their more specific training (unlearning) objectives.



\begin{table*}[htb]
\caption{Comparison of removal attacks against black-box watermarks under the transfer setting on CIFAR-10.
}
\label{tab:transfer}
\vspace{-0.05in}
\centering
\small
\setlength{\tabcolsep}{2.5pt}
\centering

\begin{tabular}{lcccccccccc}
\hline
\textbf{Attacks} & \textbf{\textit{Content}} & \textbf{\textit{Noise}} & \textbf{\textit{Unrelated}} & \textbf{\textit{Piracy}} & \textbf{\textit{EWE}} & \textbf{\textit{Adi}} & \textbf{\textit{AFS}} & \textbf{\textit{EW}} & \textbf{\textit{Blind}} & \textbf{\textit{Mark}} \\
\hline
    \textit{None}    & 93.8 / 100.0& 94.1 / 100.0& 93.7 / 100.0& 93.3 / 100.0& 94.0 / 100.0& 93.9 / 100.0& 93.0 / 100.0& 92.9 / 100.0& 91.2 / 100.0& 94.0 / 100.0 \\ 
\hline
    \textit{Regularization}
            & 49.3 / 99.4& 60.3 / 73.6& 53.1 / 100.0& 60.5 / 100.0& 53.8 / \underline{\textbf{47.6}}& 41.0 / 69.7& 43.4 / 71.5& 55.5 / 76.9& 44.6 / 69.6& 53.4 / 74.5 \\

\textit{Distraction}
            & 89.0 / 50.9& 89.6 / 100.0& 88.7 / \underline{\textbf{49.8}}& 84.2 / \underline{\textbf{49.0}}& 88.8 / 63.4& 87.7 / 60.3& 92.0 / 100.0& 92.3 / 100.0& 88.8 / 91.2& 89.1 / 66.9 \\
\hline
\textsc{Dehydra}\textsubscript{Basic}
            & 85.5 / \underline{\textbf{49.8}}& 85.3 / \underline{\textbf{6.7}}& 80.0 / \underline{\textbf{46.0}}& 82.4 / \underline{\textbf{45.2}}& 88.2 / 50.8& 80.8 / 54.5& 84.6 / \underline{\textbf{48.1}}& 82.1 / 58.5& 80.2 / 62.0& 84.0 / 55.9 \\
\textsc{Dehydra}\textsubscript{Improved}
            & 92.5 / \underline{\textbf{47.0}}& 92.3 / \underline{\textbf{5.7}}& 92.7 / \underline{\textbf{48.2}}& 90.1 / \underline{\textbf{47.1}}& 92.6 / \underline{\textbf{49.7}}& 80.4 / 59.8& 86.2 / \underline{\textbf{47.4}}& 83.5 / \underline{\textbf{49.8}}& 88.8 / \underline{\textbf{46.8}}& 87.0 / \underline{\textbf{49.5}} \\         

\hline
\end{tabular}
\vspace{-0.1in}
\end{table*}

\noindent\textbf{(2) Transfer Setting.}
As we can see from Table~\ref{tab:transfer}, 
the clean accuracy of the surrogate models are generally lower than the in-distribution setting, since the proxy labels of these transfer data may contain noise.
\textit{Regularization} is generally ineffective, and \textit{Distraction} is only effective against a subset of fixed-class watermarks.

On the contrary, both the basic and improved \textsc{Dehydra} significantly lower the rescaled watermark accuracy while preserving the clean accuracy at an acceptable level.
The improved \textsc{Dehydra} still successfully cracks nine of the ten watermarks under this setting.
Note that our \textsc{Dehydra} cannot completely remove the \textit{Adi} watermark (despite the rescaled watermark accuracy being lower than 60\%) under the transfer setting.
This is reasonable because \textit{Adi} uses abstract out-of-distribution images as watermark data, whose salient neurons are more likely to be activated by the transfer dataset, compared to other common watermarks.


\noindent\textbf{(3) Data-free Setting.}
Almost no baseline attacks except for the \textit{Pruning} attack are applicable to the data-free scenario. Therefore, as also explained in Section~\ref{sec:unlearn2}, we mainly compare the improved \textsc{Dehydra}, when attacking five fixed-class watermarks here, with \textit{Pruning}.
The attack results are shown in Table~\ref{tab:data_free}.

\begin{table}[t]
    \caption{Comparison of removal attacks against black-box watermarks under the data-free setting on CIFAR-10.}
    \label{tab:data_free}
    \vspace{-0.05in}
    \centering
    \footnotesize
    \setlength{\tabcolsep}{1.5pt}
    \centering
    \resizebox{\linewidth}{!}{
        \begin{tabular}{cccccc}
        \hline
        \textbf{Attacks} & \textbf{\textit{Content}} & \textbf{\textit{Noise}} & \textbf{\textit{Unrelated}} & \textbf{\textit{Piracy}} & \textbf{\textit{EWE}} \\
        \hline
            \textit{None}        & 93.8 / 100.0& 94.1 / 100.0& 93.7 / 100.0& 93.3 / 100.0& 94.0 / 100.0 \\
        \hline
            \textit{Pruning}     & 81.5 / 78.2& 90.7 / \underline{\textbf{42.5}}& 78.7 / \underline{\textbf{45.4}}& 41.9 / \underline{\textbf{37.8}}& 92.2 / 100.0 \\
        \hline
        
        \makecell{\textsc{Dehydra}\\(Improved)}
                        & \makecell{89.8 / \underline{\textbf{45.8}}} & \makecell{88.6 / \underline{\textbf{5.7}}} & \makecell{88.8 / \underline{\textbf{46.0}}} & \makecell{90.0 / \underline{\textbf{39.6}}} & \makecell{90.7 / \underline{\textbf{48.2}}} \\

        
        \hline
        \end{tabular}
    }
    \vspace{-0.1in}
\end{table}

As is shown, our improved \textsc{Dehydra} achieves surprisingly good performance in the data-free setting, significantly surpassing the baseline \textit{Pruning}.
All the five fixed-class watermarks are completely removed from the protected model, with a clean accuracy degradation of no more than 5.5\%.
This validates the correctness of our splitting algorithm on the recovered samples, since the surrogate modes are finetuned solely on the recovered samples $\{B_c\}_{\ast}=\{B_{s_0}\}$, with split proxy normal data $\{B_{s_0, nor}\}$ to maintain the utility, and proxy watermark data $\{B_{s_0, wmk}\}$ for removal, as in Equation~\ref{eq:unlearn_fix}.

\subsection{Validating Detection of Target Classes} \label{sec:exp-target-detection}
In this part, we mainly study the risks of target class detection errors.
We temporarily ignore the case when fixed-class watermarks are correctly identified but given the incorrect target label, since this is extremely rare, as validated by later quantitative experiments.
Here we mainly focus on two types of errors, false positives (non-fixed-class watermarks identified as fixed-class ones) and false negatives (fixed-class watermarks identified as non-fixed-class ones).

Intuitively, we have a lower tolerance for false positives compared to false negatives.
As shown in Table~\ref{tab:fix-non-fix}, for fixed-class watermarks (\textit{Content} and \textit{Piracy}) wrongly identified as non-fixed-class ones, the clean accuracy will degrade a little (e.g., within 3\% on CIFAR-10) and the watermark is still effectively removed.
However, for non-fixed-class watermarks (\textit{Adi} and \textit{AFS}) falsely identified as fixed-class ones, the removal effectiveness will be significantly hampered (e.g, the rescaled watermark accuracy remains over 90\%).
Therefore, we should set the detection threshold conservatively to prevent false positive cases.

\begin{table}[!tb]
\caption{Attack results of improved \textsc{Dehydra} on CIFAR-10 and CIFAR-100, with the target class detection results deliberately \textit{resp}. set to fixed-class and non-fixed-class ones.}
\label{tab:fix-non-fix}
\centering
\vspace{-0.05in}
\small
\setlength{\tabcolsep}{2.5pt}
\begin{tabular}{lc cccc}
\hline
\textbf{Dataset} & \textbf{Detection} & \textbf{\textit{Content}} & \textbf{\textit{Piracy}} & \textbf{\textit{Adi}} & \textbf{\textit{AFS}} \\
\hline
\multirow{2}{*}{CIFAR-10} & fixed & 93.1 / \underline{\textbf{45.8}}& 91.4 / \underline{\textbf{41.5}}& 92.0 / 96.5& 88.8 / 93.7 \\
            & non-fixed & 90.7 / \underline{\textbf{44.2}}& 88.9 / \underline{\textbf{40.3}}& 86.0 / \underline{\textbf{49.3}}& 88.1 / \underline{\textbf{44.3}} \\

\hline
\multirow{2}{*}{CIFAR-100} & fixed & 75.1 / \underline{\textbf{48.7}}& 71.0 / \underline{\textbf{41.8}}& 64.0 / 71.5& 67.1 / 100.0 \\

 & non-fixed & 67.7 / \underline{\textbf{48.7}}& 65.8 / \underline{\textbf{40.0}}& 64.8 / 60.3& 65.4 / 54.8 \\

\hline
\end{tabular}
\vspace{-0.1in}
\end{table}

\begin{figure}[t]
\centering
\includegraphics[width=1\linewidth]{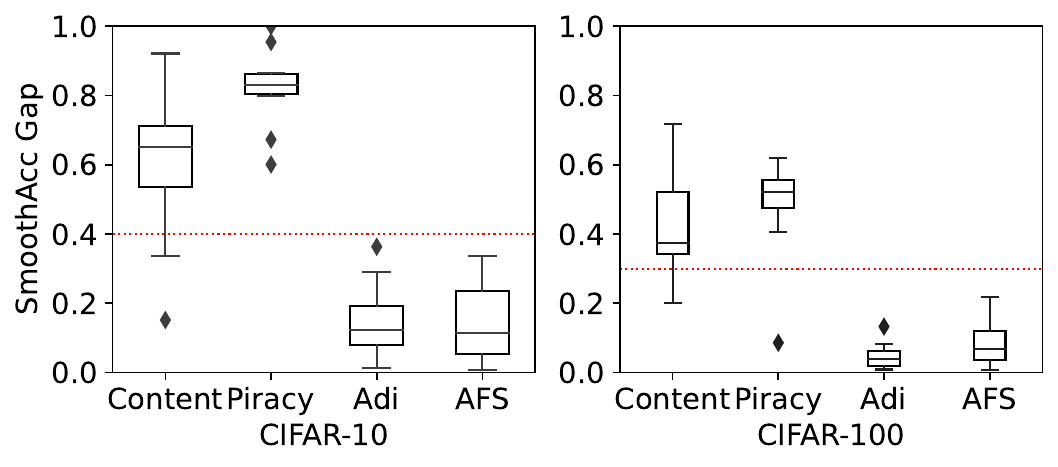}
\vspace{-0.2in}
\caption{SmoothAcc gap distribution under ten random tests.
The box plot shows min/max and quartiles, as well as the estimated outliers.
}
\label{fig:boxplot}
\vspace{-0.1in}
\end{figure}

Next, we further investigate the detection accuracy of our target class detection algorithm.
For each watermark scheme on each dataset, we independently train ten watermarked models, each using the randomly re-generated watermark data with different target classes.
For each target watermarked model, we perform watermark recovering and then target class detection.
We calculate the SmoothAcc gap $SmoothAcc(B_{y_w}) - \max_{i\neq y_w} SmoothAcc(B_i)$ for a fixed-class watermark with target class $y_w$, and $SmoothAcc(B_{s_1}) - SmoothAcc(B_{s_2})$ for a non-fixed-class watermark. As shown in Figure~\ref{fig:boxplot}, the SmoothAcc gaps for fixed-class watermarks are always positive in the ten tests, which means we can always determine a correct target class for those successfully identified fixed-class watermarks. Besides, the SmoothAcc gaps of fixed-class and non-fixed-class watermarks are large enough to find a threshold to determine them, which validates the correctness of our target class detection algorithm.


\subsection{Ablation Studies} \label{sec:Ablation}

\subsubsection{Effectiveness of the Two Improvements} \label{sec:ablation_two}

We further validate the effectiveness of the two proposed improvements. As shown in Table \ref{tab:ablation}, target class detection is especially useful to fixed-class watermarks. It is also useful to non-fixed-class watermarks, due to the easier unlearning target of proxy hard labels.
The splitting algorithm is generally beneficial to both fixed-class and non-fixed-class watermarks to preserve normal performance, due to the enhanced removal specificity.
Armed with the two designs, the improved \textsc{Dehydra} achieves the highest clean accuracy, with a 4\% increase in average compared with the basic version.

\begin{table}[bt]
\caption{Ablation study on the two improvements, target class detection and recovered samples splitting, under the in-distribution setting on CIFAR-10.}
\label{tab:ablation}
\vspace{-0.05in}
\centering
\small
\setlength{\tabcolsep}{1.5pt}
\begin{tabular}{c cccc}
\hline
\textbf{Attacks} & \textbf{\textit{Unrelated}} & \textbf{\textit{EWE}} & \textbf{\textit{EW}} & \textbf{\textit{Mark}}\\
\hline
\textsc{Dehydra}\textsubscript{Basic} & 88.1 / \textbf{44.9} & 89.0 / \textbf{48.2} & 85.2 / 52.1 & 83.0 / 50.7 \\
\hline
\textsc{Dehydra} + Detection	& 91.9 / \underline{\textbf{44.9}}	& 92.4 / \underline{\textbf{47.6}}	& 86.8 / \underline{\textbf{46.9}}	& 87.8 / \underline{\textbf{48.4}}	 \\
\textsc{Dehydra} + Splitting 	& 88.3 / \underline{\textbf{44.9}}	& 89.6 / \underline{\textbf{47.6}}	& 87.2 / 51.5	& 85.7 / \underline{\textbf{48.9}}	 \\
\hline
\textsc{Dehydra}\textsubscript{Improved} & 92.7 / \textbf{44.9} & 93.3 / \textbf{47.6} & 88.2 / \textbf{46.3} & 88.5 / \textbf{47.2} \\


\hline
\end{tabular}
\vspace{-0.1in}
\end{table}

\subsubsection{Comparison with Other Recovering Methods} \label{sec:inv_comp}
We further prove the effectiveness of the watermark recovering algorithm of \textsc{Dehydra}, by comparing with representative algorithms from \textbf{(1) Trigger Reverse-Engineering} (e.g., an adaptive version of \textit{NeuralCleanse}~\cite{wang2019NC}), which is originally designed to detect and erase backdoors in DNNs,
and \textbf{(2) General Model Inversion} (e.g., \textit{DeepInversion}~\cite{yin2020dreaming}), which is designed to synthesize realistic training samples of the target models for privacy disclosure or knowledge distillation.
Our experiments below focus on the in-distribution data setting on CIFAR-10, and perform watermark removal attacks against three black-box watermarks, \textit{Content}, \textit{EWE} and \textit{Adi}.
More implementation details are deferred to Appendix~\ref{app:comp_recover_impt}.



\begin{figure}[t]
\centering
\includegraphics[width=\linewidth]{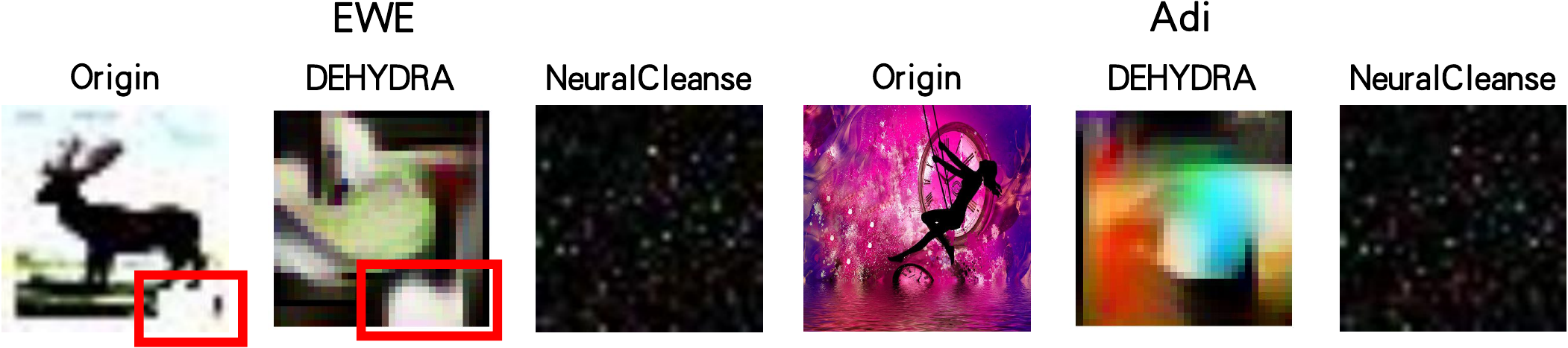}
\caption{Visualization of the recovered watermark images/trigger patterns by \textsc{Dehydra}/\textit{NeuralCleanse}.}
\label{fig:nine}
\vspace{-0.1in}
\end{figure}

\begin{figure}[t]
  \centering
    \includegraphics[width=\linewidth]{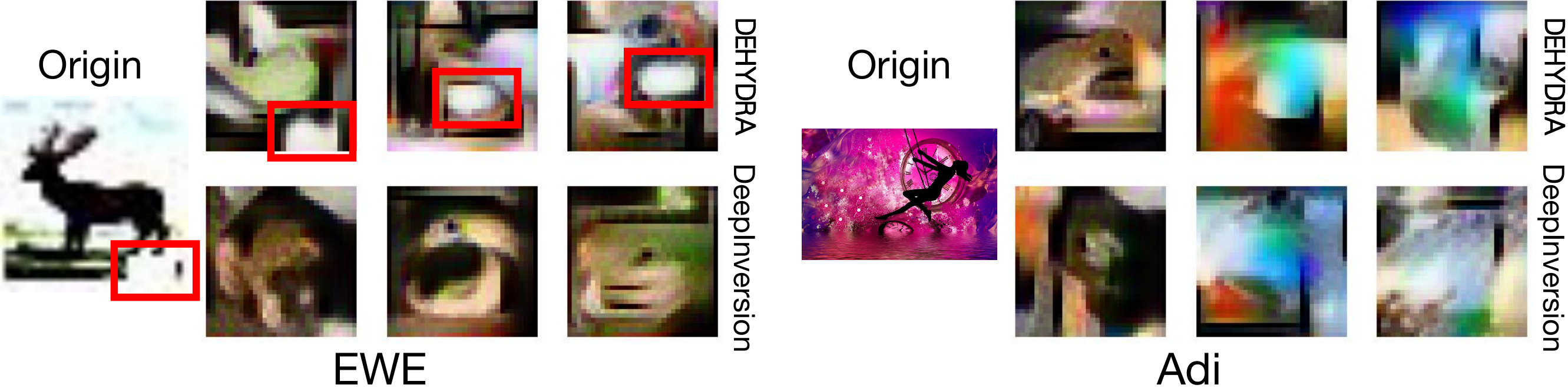} 
  \caption{Visualization of the recovered samples by \textsc{Dehydra} and \textit{DeepInversion}.}
  \label{fig:comp_general_inv}
\end{figure}

\begin{table}[t]
    \caption{Attack results using different recovering methods on CIFAR-10.}
    \vspace{-0.05in}
    \centering
    \small
    \setlength{\tabcolsep}{3pt}
    \begin{tabular}{lccc}
        \hline
        \textbf{Inversion Method} & \textbf{\textit{Content}} & \textbf{\textit{EWE}} & \textbf{\textit{Adi}} \\
        \hline
        \textsc{Dehydra} & 92.6 / \underline{\textbf{44.7}} & 92.5 / \underline{\textbf{48.2}} & 85.1 / \underline{\textbf{48.7}} \\
        \textit{NeuralCleanse} & 92.5 / \underline{\textbf{47.5}} & 92.2 / 98.4& 84.4 / 93.0 \\ 
        \textit{DeepInversion} & 91.3 / 68.7& 91.2 / 99.0& 84.3 / \underline{\textbf{49.3}} \\ 
        \hline
    \end{tabular}
    \vspace{-0.05in}
    \label{tab:inverse_comp}
\end{table}

\noindent \textbf{Visual Effects.}
Figure~\ref{fig:nine} visualizes the recovered watermark data by \textsc{Dehydra} and the trigger patterns by \textit{NeuralCleanse}, while Figure~\ref{fig:comp_general_inv} compares \textsc{Dehydra} with \textit{DeepInversion}.
As is shown, the recovered triggers by \textit{NeuralCleanse} are visually meaningless and rather similar to each other.
This implies that the assumed shortcut in the model's decision space is non-existent in these watermarked models, and the recovered triggers are only the universal adversarial perturbations of the CIFAR-10 dataset itself.
Also, \textit{DeepInversion} tends to produce more realistic images belonging to the target classes (e.g., "frogs" for \textit{EWE}), but less informative of the watermark data.
In comparison, most of the recovered watermark data holds the essential features of the real watermark patterns, such as the white patches for \textit{EWE} and colorful fringing for \textit{Adi}.
These visual differences demonstrate the effectiveness of our aggressive inversion scheme.



\noindent \textbf{Attack Results.}
More quantitatively, we investigate the removal effectiveness using the three recovering methods, with attack results shown in Table~\ref{tab:inverse_comp}.
As we can see, \textit{NeuralCleanse} is ineffective against \textit{EWE} and \textit{Adi}, due to the inexistence of backdoor shortcuts.
For \textit{DeepInversion}, unlearning the inverted samples from class 6 fails to remove \textit{EWE}, because the BN statistics of the watermark data might diffuse into other classes during its random-class inversion scheme.
The recovered samples are therefore more representative of the normal data, unlearning which incurs a larger utility loss.
Only \textsc{Dehydra} is effective for removing all three watermarks from the protected models, conforming to the previous visual observations.

\begin{table}[tb]
    \caption{Effectiveness of \textsc{Dehydra} on watermarked models after differential privacy training.}
    \vspace{-0.05in}
    \centering
    \label{tab:dpsgd}
    \small
    \setlength{\tabcolsep}{3pt}
    \centering
    \begin{tabular}{c cc cc}
        \hline
        \multirow{2}{*}{\textbf{Eps}} & \multicolumn{2}{c}{\textbf{\textit{Content}}} & \multicolumn{2}{c}{\textbf{\textit{Adi}}}\\ 
                \cmidrule(r){2-3} \cmidrule(r){4-5}
               & Before & After & Before & After \\ 
        \hline
        \textsc{$\epsilon = 1$}& 57.3 / 100.0	& 51.5 / 100.0	& 53.0 / 100.0	& 45.0 / 68.5	 \\
        \textsc{$\epsilon = 2$}& 62.0 / 100.0	& 55.9 / 100.0	& 60.0 / 99.4	& 53.9 / 62.1	 \\
        \textsc{$\epsilon = 6$}& 79.8 / 100.0	& 63.4 / 83.3	& 77.8 / 99.4	& 62.8 / 57.4	 \\
        \hline
    \end{tabular}
    \vspace{-0.1in}
\end{table}

\begin{table}[tb]
    \caption{The effectiveness of deceiving the detection algorithm against our improved \textsc{Dehydra}.}
    \vspace{-0.05in}
    \centering
    \label{tab:counter_det}
    \small
    \setlength{\tabcolsep}{2pt}
    \begin{tabular}{c ccc ccc}
        \hline
        \multirow{2}{*}{\textbf{$N_{trap}$}} & \multicolumn{3}{c}{\textbf{\textit{Adi}}} & \multicolumn{3}{c}{\textbf{\textit{EW}}}\\ 
                \cmidrule(r){2-4} \cmidrule(r){5-7}
               & Before & After & Gap & Before & After & Gap \\ 
        \hline
        100	& 87.6 / 100.0	& 84.4 / 60.3	& 0.1233	& 90.2 / 100.0	& 87.4 / 51.5	& 0.0766	 \\
        150	& 90.3 / 100.0	& 85.1 / 60.9	& 0.1291	& 92.7 / 100.0	& 89.9 / 92.5	& 0.7825	 \\
        200	& 89.4 / 100.0	& 89.5 / 95.9	& 0.5406	& 90.3 / 100.0	& 90.0 / 98.3	& 0.6226	 \\
        \hline
    \end{tabular}
    \vspace{-0.05in}
\end{table}

\begin{table}[tb]
    \caption{The effectiveness of bypassing the splitting algorithm against our improved \textsc{Dehydra}.}
    \vspace{-0.05in}
    \centering
    \label{tab:counter_split}
    \small
    \setlength{\tabcolsep}{2pt}
    \centering
    \begin{tabular}{c cc cc}
        \hline
        \multirow{2}{*}{\textbf{WM size}} & \multicolumn{2}{c}{\textbf{\textit{Content}}} & \multicolumn{2}{c}{\textbf{\textit{EWE}}}\\ 
                \cmidrule(r){2-3} \cmidrule(r){4-5}
               & Before & After & Before & After \\ 
        \hline
        1000	& 91.9 / 100.0	& 89.0 / 52.0	& 90.3 / 99.5	& 89.2 / 62.3	 \\
        2000	& 91.8 / 100.0	& 90.3 / 53.1	& 90.0 / 100.0	& 89.2 / 74.9	 \\
        5000	& 88.5 / 100.0	& 87.9 / 58.1	& 89.4 / 100.0	& 88.6 / 79.6	 \\
        \hline
    \end{tabular}
    \vspace{-0.05in}
\end{table}

\subsection{Potential Defenses} \label{sec:countermeasure}
The experiments above validate that \textsc{Dehydra} poses real threats to existing black-box watermarks. In this section, we further consider potential defenses against \textsc{Dehydra}. Specifically, we design defenses to resist or bypass the following three key components of \textsc{Dehydra}: the inversion-based framework, target class detection, and recovered samples splitting.

\noindent \textbf{Differential Privacy.}
Differential privacy (DP) is a general approach for privacy protection \cite{abadi2016deep}.
Here, we investigate the effectiveness of differential privacy to reduce the watermark information inverted by \textsc{Dehydra}. Roughly speaking, we use DP-SGD with the privacy budgets $(\epsilon, \delta)$ during the watermark embedding process: $\delta$ is set to be a fixed value of $1\times{10}^{-5}$ and the $\epsilon$ is varied to evaluate the attack effect under different privacy budgets, a smaller $\epsilon$ means adding larger noise to the gradients.
As we can see in Table \ref{tab:dpsgd}, with the decrease of $\epsilon$, 
\textsc{Dehydra} will achieve a worse watermark removal.
This is because the noise added to the gradients during the training phase makes the learned features of watermark information more scattered, thus compromising the effectiveness of the attack.
However, this is brought with a significant accuracy drop of 40\% on average.


\noindent \textbf{Deceiving the Detection Algorithm.}
Here we try to cause the severer false positive errors (making non-fixed-class watermarks detected as fixed-class ones) in Section \ref{sec:exp-target-detection}.
Concretely, model owners can deliberately change the watermark smoothness by adding $N_{trap}$ watermark samples with the specified identical labels (in addition to the original $100$ watermark samples with non-fixed target classes) during training. In Table \ref{tab:counter_det}, when $N_{trap} = 200$, the SmoothAcc Gap will be larger than the detection threshold, and the watermark will be detected as fixed-class ones.
However, forcing models to memorize more samples will cause an accuracy drop about 3\% on average.


\noindent \textbf{Bypassing the Splitting Algorithm.}
The model owner may attempt to bypass the recovered samples splitting algorithm, by deliberately adding more watermark samples into the target classes and disrupting the dominance of normal data. This may decrease the attack effectiveness of the improved \textsc{Dehydra} in data-limited settings, as we can see in Table \ref{tab:counter_split}, the larger trigger set size, the worse the removal effect will be.
However, the attacker may skip the splitting procedure and restore the attack effectiveness with in-distribution/transfer data.

\begin{table*}[tb]
\caption{Comparison to extraction attacks under different data ratios w.r.t. the training data on CIFAR-10.}
\label{tab:ratio_extract}
\vspace{-0.05in}
\centering
\small
\setlength{\tabcolsep}{3pt}
\begin{tabular}{lccccccccc}
\hline
\multirow{2}{*}{\textbf{Data Ratio}} & \multicolumn{3}{c}{\textbf{\textit{Noise}}} & \multicolumn{3}{c}{\textbf{\textit{EWE}}}& \multicolumn{3}{c}{\textbf{\textit{AFS}}}   \\ 
        \cmidrule(r){2-4} \cmidrule(r){5-7}\cmidrule(r){8-10}
                    & \textsc{Dehydra} & \textit{Extraction} & \textit{Combined} & \textsc{Dehydra} & \textit{Extraction} & \textit{Combined} & \textsc{Dehydra} & \textit{Extraction} & \textit{Combined}  \\ 
\hline
2\%	& 93.1 / \underline{\textbf{5.7}}	& 46.5 / \underline{\textbf{24.6}}	& 45.5 / \underline{\textbf{14.2}}	& 93.3 / \underline{\textbf{47.6}}	& 48.8 / 56.5	& 46.6 / 50.3	& 88.1 / \underline{\textbf{44.3}}	& 48.0 / \underline{\textbf{44.9}}	& 45.8 / \underline{\textbf{45.5}}	 \\
5\%	& 93.3 / \underline{\textbf{10.4}}	& 62.1 / 88.7	& 56.6 / \underline{\textbf{5.7}}	& 93.1 / \underline{\textbf{48.7}}	& 58.7 / 55.0	& 56.0 / \underline{\textbf{49.2}}	& 90.4 / \underline{\textbf{46.2}}	& 63.5 / \underline{\textbf{46.2}}	& 53.5 / \underline{\textbf{48.1}}	 \\
10\%	& 93.4 / \underline{\textbf{8.5}}	& 72.4 / \underline{\textbf{18.0}}	& 65.6 / \underline{\textbf{5.7}}	& 93.3 / \underline{\textbf{49.7}}	& 70.1 / \underline{\textbf{48.2}}	& 66.3 / \underline{\textbf{49.2}}	& 91.2 / \underline{\textbf{46.8}}	& 71.8 / \underline{\textbf{46.8}}	& 67.4 / \underline{\textbf{48.7}}	 \\
33.33\%	& 94.0 / \underline{\textbf{8.5}}	& 86.3 / 81.1	& 78.5 / \underline{\textbf{33.1}}	& 93.2 / \underline{\textbf{49.2}}	& 81.8 / 60.2	& 78.9 / \underline{\textbf{47.6}}	& 92.8 / 50.6	& 86.7 / \underline{\textbf{45.5}}	& 78.5 / \underline{\textbf{50.0}}	 \\
\hline
\end{tabular}
\vspace{-0.1in}
\end{table*}

\section{Discussion}\label{sec:discussion}

\noindent \textbf{Black-box DNN Watermarks and Backdoors.}
In previous literature, black-box watermarks are sometimes also called backdoor-based~\cite{Shafieinejad2019regularization, liu2021wild, Aiken2020Laundering} or trigger set-based watermarks~\cite{lee2022evaluating}.
However, black-box watermarks have substantial differences from DNN backdoors in terms of the watermark data patterns and target class settings (\S\ref{sec:back-wm}), especially for the non-fixed-class watermarks.

Therefore, attacking black-box watermarks has its unique challenges, especially for an attacker who usually has no knowledge of the underlying watermarking schemes.
This also explains why existing removal attacks extended from backdoor defenses, such as \textit{Laundering} (\S\ref{sec:attack-performance}) and \textit{NeuralCleanse} (\S\ref{sec:inv_comp}), could not effectively remove most of the watermarks.

\noindent \textbf{Attack Performance when More Data is Available.}
Our work primarily focuses on the data-limited settings, where only 2\% of the training data is used in most of the experiments (\S\ref{sec:attack-performance}). This limited data availability may have a slight impact on the utility of the surrogate model for several non-fixed-class watermarks.
In this part, we evaluate the performance of the improved \textsc{Dehydra} when more data is available. As shown in Table~\ref{tab:ratio_extract}, for the non-fixed-class watermark \textit{AFS}, the utility loss is mitigated when more data is available. The accuracy drop is controlled within 1\% for the three investigated watermarks, when 33\% of the training data is available.


\noindent \textbf{Comparison to Model Extraction Attacks.}
Model extraction attacks are effective for watermark removal but typically require a large query dataset. Here we compare the improved \textsc{Dehydra} with the classical \textit{Extraction} attack~\cite{jia2021entangled} and the recently proposed \textit{Combined} attack (which combines transfer learning with label smoothing) that has been reported as effective against all watermarks using one-third of the training data~\cite{lukas2022sok}.
As shown in Table~\ref{tab:ratio_extract}, the surrogate models obtained through extraction attacks demonstrate significantly dependent performance on the available dataset size, resulting in relatively low accuracy when the data ratio is 2\%, 5\%, and 10\%. In contrast, our \textsc{Dehydra} maintains high accuracy while effectively removing the watermark across all data ratios.

\noindent \textbf{Comparison to Data Extraction Attacks.}
Data extraction attacks aim at reconstructing training data from a well-trained DNN model.
Since DNN watermarks are typically designed to trace ownership after model leakage (\S\ref{sec:back-wm}), it is essential to thoroughly assess the privacy risk of the watermark data when the model parameters are completely exposed.
In fact, previous \textit{Content}~\cite{zhang2018protecting} and \textit{EWE}~\cite{jia2021entangled} watermarks did investigate the watermark privacy risk, while using naive extraction methods, leading to the conclusion that their watermarks are "secure".
On the contrary, \textsc{Dehydra} specially targets the overfitted watermark data, and hence adopts an aggressive class-wise inversion method to better exploit model internals.
Ablation studies in Section~\ref{sec:inv_comp} also demonstrate that our \textsc{Dehydra} captures more meaningful visual features of the watermark samples (Figure~\ref{fig:nine},~\ref{fig:comp_general_inv}) and achieves significantly better removal results (Table~\ref{tab:inverse_comp}) compared to traditional data extraction methods.


\noindent \textbf{Attack Efficiency.}
For a target model with a large number of class labels, the main computation cost of \textsc{Dehydra} comes from the class-wise watermark recovering process, which is proportional to the number of classes.
Similar to the solutions in NeuralCleanse~\cite{wang2019NC}, we find parallel computation on multiple GPUs and early-stopping mechanisms also useful here.
Moreover, we propose a multi-class inversion mechanism while optimizing a mixed data batch, where the BN term is calculated for samples of each class separately.

\noindent \textbf{Limitations and Future Works.}
\louis{Our methodological design and evaluations primarily focus on the image classification task, which is consistent with the mainstream DNN watermarking research \cite{lederer2023sba, lukas2022sok, lee2022evaluating}.
We note that there do exist some recent works extending black-box DNN watermarks to other domains, including self-supervised learning for encoders, graph-based tasks, and text-based tasks.
For these new applications, we posit that the watermark information is still deeply embedded within the internals of the watermarked models, suggesting that \textsc{Dehydra} continues to be a promising and effective attack.
The primary challenge to adapt the \textsc{Dehydra} attack lies in developing a suitable recovering method (for instance, prompting-based methods \cite{carlini2021extracting} could be utilized to recover the watermark information stored in a large language model).
Nonetheless, we leave it as future work to explore the further effectiveness of \textsc{Dehydra} on other tasks.
}

\section{Conclusion}

In this paper,
we identify a shared trait of existing black-box DNN watermarks: they exploit the over-parameterization of DNNs to especially memorize the watermark data correlated to the target classes.
Based on this,
we propose a novel watermark-agnostic removal framework, namely \textsc{Dehydra}, which exploits the model internals to recover and unlearn the underlying watermarks.
With in-depth analysis, theoretical justifications and pilot studies, we further improve \textsc{Dehydra} with target class detection and recovered sample splitting algorithms, which help preserve the model utility and significantly reduce the dataset dependence.
Extensive evaluations on various settings against ten mainstream black-box watermarks demonstrate that \textsc{Dehydra} is generally effective and has minimal utility impact, with much more relaxed or even no assumptions on the dataset availability.



\begin{acks}
We would like to thank the anonymous reviewers for their insightful comments that helped improve the quality of the paper.
This work was supported in part by the National Key Research and Development Program (2021YFB3101200), National Natural Science Foundation of China (62472096, 62172104, 62172105, 62102093, 62102091, 62302101, 62202106).
Min Yang is a faculty of Shanghai Institute of Intelligent Electronics \& Systems and Engineering Research Center of Cyber Security Auditing and Monitoring, Ministry of Education, China.
Mi Zhang and Min Yang are the corresponding authors.
\end{acks}

\bibliographystyle{ACM-Reference-Format}
\balance
\bibliography{lyf_ref_revised}

\newpage

\appendix

\section{Omitted Details of \textsc{Dehydra}}
\subsection{Classical Natural Image Priors}
\label{sec:app:nip}
Specifically, we incorporate two regularization terms $\mathcal{R}_{\ell_2}(\cdot)$ and $\mathcal{R}_{tv}(\cdot)$, as demonstrated in~\cite{mordvintsev2015inceptionism}, which correspond to $\ell_2$ distance and total variation, to steer the optimized samples away from unrealistic ones:
\begin{equation}
    \mathcal{R}_{\ell_2}(B) = \sum_{\hat{x}_i \in B} \lVert \hat{x}_i \rVert_2^2,
\end{equation}
\begin{equation}
    \mathcal{R}_{tv}(B) = \sum_{\hat{x}_i \in B} \sum_{(j,k)} \sum_{(j^\prime, k^\prime) \in \delta(j,k)} \lVert \hat{x}_{i_{(j,k)}} - \hat{x}_{i_{(j^\prime,k^\prime)}} \rVert_2^2,
\end{equation}
where $\delta(j,k)$ indicates a set of pixels adjacent to the pixel at $(j,k)$ in the image $\hat{x}_i$.

\subsection{Splitting Recovered Samples} \label{app:algo}
The detailed pipeline to class-wise split the recovered samples is shown in Alg.\ref{alg:split}.

\begin{algorithm}[!htb]
  \caption{Split recovered samples.}
  \label{alg:split}
  \textbf{Input:} target model $f_w$ and its $l$-th layer to analyse, recovered samples $\{B_c\}_{\ast}$, saliency ratio $\beta$, split ratio $\gamma$.\\
  \textbf{Output:} proxy watermark samples $\{B_{c,wmk}\}_\ast$, proxy normal samples $\{B_{c,nor}\}_\ast$.
  
\begin{algorithmic}[1]
  \STATE $M \gets$ samples' size of each $B_c$ in $\{B_c\}_{\ast}$, \\
  $D \gets$ neurons' size of the $l$-th layer
  \STATE $\{B_{c,wmk}\}_\ast \gets \{\}$, $\{B_{c,nor}\}_\ast \gets \{\}$
  
  \FOR{batch $B_c$ in $\{B_c\}_{\ast}$}
  \STATE Forward propagate and extract activation of the $l$-th layer $act={f_w}_{(l)}(B_c)$ \\
  /* Find salient neurons */ 
  \STATE $Impt \gets \{\}$
  \FOR{neuron $j=0,1,\dots, D-1$}
  \STATE $Impt[j] \gets \frac{\mu_j}{\sigma_j}$  
  \ENDFOR
  \STATE $sorted\_neurons \gets \text{argsort}(Impt)$
  \STATE $salient\_neurons \gets sorted\_neurons[: \beta \times D]$ \\
  /* Split recovered batch $B_c$ */ 
  \STATE $Contribution \gets \{\}$
  \FOR{sample $\hat{x}_i$ in $B_c$}
  \STATE $Contribution[i] \gets \text{sum}(act[i, salient\_neurons])$
  \ENDFOR
  \STATE $sorted\_samples \gets \text{argsort}(Contribution)$
  \STATE $proxy\_wmk \gets sorted\_samples[\gamma \times M :]$, \\
  $proxy\_nor \gets sorted\_samples[: \gamma \times M]$
  \STATE $B_{c, wmk} \gets B_c[proxy\_wmk]$, \\
  $B_{c, nor} \gets B_c[proxy\_nor]$
  \STATE Append split data $B_{c,wmk}$ and $B_{c,nor}$ to $\{B_{c,wmk}\}_\ast$ and $\{B_{c,nor}\}_\ast$ respectively
  \ENDFOR
  \RETURN $\{B_{c,wmk}\}_\ast$, $\{B_{c,nor}\}_\ast$
\end{algorithmic}
\end{algorithm}

\section{Omitted Details in Experiments}
\subsection{More Backgrounds on Existing Black-box Model Watermarking Schemes} \label{sec:back_impt_wms}
Our evaluation covers ten existing black-box DNN watermarks, which are categorized as \textit{fixed-class watermarks} (i.e., all the watermark data are paired with the identical target class), and \textit{non-fixed-class watermarks} (i.e., each watermark data is paired with its own target class).
\begin{itemize}
    \item \textit{Fixed-class watermarks:}
    \textbf{(1) \textit{Content}, (2) \textit{Noise}, (3) \textit{Unrelated}~\cite{zhang2018protecting}.}
    \textit{Content} and \textit{Noise} watermarks draw normal samples from a source class and apply watermark patterns, which are text shapes and random noises respectively, as watermark data.
    \textit{Unrelated} watermark draws samples from another dataset.
    All these selected watermark data are annotated with one fixed target class.
    \textbf{(4) \textit{Piracy}~\cite{li2019piracy}} \lwx{ modifies the values of certain pixels in an image to extreme maximum or minimum values as the trigger, which is then mapped to a fixed class. The position of these pixels and the target class are determined by a signature provided by the user}.
    \textbf{(5) \textit{EWE}~\cite{jia2021entangled}} selects samples from a source class, adds square patterns and annotates with a fixed target class.
    During training, EWE enforces the model to learn entangled features between watermark samples and normal samples of the target class.
    \item \textit{Non-fixed-class watermarks:}
    \textbf{(6) \textit{Adi}~\cite{adi2018turning}} leverages a set of abstract out-of-distribution images as watermark data and annotates them randomly.
    \textbf{(7) \textit{AFS}~\cite{le2020afs}} generates adversarial examples near the classification boundary of a normally trained model and annotates them with the ground-truth label. The model is then finetuned with these samples added to the training set.
    \textbf{(8) \textit{EW}~\cite{namba2019ew}} selects normal samples and annotates them with random labels except the ground-truth.
    During embedding, a normally trained model is applied with an exponential weight operator layer-wise, and then finetuned with watermark samples added to the training set.
    \textbf{(9) \textit{Blind}~\cite{li2019blind}} \lwx{takes ordinary samples and exclusive logos as inputs, uses an encoder to generate watermarks that are almost indistinguishable from the original samples, and uses a discriminator to assist in training. All watermarks are labeled as a fixed class.}.
    \textbf{(10) \textit{Mark}~\cite{guo2018mark}} \lwx{modifies a portion of the dataset based on the user's signature. The pixel values at specific positions of each image are modified to extreme values according to the signature, and the label of each image is mapped to one of the remaining labels in a specific way based on the signature.}.
\end{itemize}

\subsection{Implementations of the Target Watermarks}\label{app:impl-wm}
Our implementations mainly referred to some recent open-source watermark framework \cite{lederer2023sba, lukas2022sok}, and strictly followed the specifications and hyperparameters in the original papers.
For \textit{Piracy}~\cite{li2019piracy} and \textit{Blind}~\cite{li2019blind}, we set the size for watermark data as 1\% of the clean training set, following their papers.
For other black-box watermarks, the size of the watermark data is set to 100.

\begin{table}[t]
\caption{The performance of the target watermarked models on three datasets. The threshold is estimated on 20 null models as suggested by Lukas et al. \cite{lukas2022sok}}
\label{tab:target_models}
\vspace{-0.05in}
\centering
\setlength{\tabcolsep}{2pt}
\resizebox{\linewidth}{!}{
    \begin{tabular}{l cc cc cc}
        \hline
            \multirow{2}{*}{Watermarks} & \multicolumn{2}{c}{\textbf{MNIST}} & \multicolumn{2}{c}{\textbf{CIFAR-10}}& \multicolumn{2}{c}{\textbf{CIFAR-100}}   \\ 
            \cmidrule(r){2-3} \cmidrule(r){4-5}\cmidrule(r){6-7}
                        & Clean / WM & Thre. & Clean / WM & Thre. & Clean / WM & Thre.  \\ 
            \hline
            \textbf{\textit{Content}}     & 99.08 / 100 & 0.0083  & 93.77 / 100 & 0.1044  & 76.27 / 100   & 0.0244   \\ 
            \textbf{\textit{Noise}}       & 99    / 100 & 0.1057  & 94.13 / 100 & 0.4697  & 76.67 / 100   & 0.0151   \\ 
            \textbf{\textit{Unrelated}}   & 98.94 / 100 & 0.088   & 93.65 / 100 & 0.0931  & 75.93 / 100   & 0.0305   \\ 
            \textbf{\textit{Piracy}}      & 99.22 / 100 & 0.2807  & 93.31 / 100 & 0.1966  & 73.69 / 100   & 0.1672   \\ 
            \textbf{\textit{EWE}}         & 98.86 / 100 & 0.0025  & 93.97 / 100 & 0.045   & 67.73 / 100   & 0.0059   \\ 
            
            \textbf{\textit{Adi}}         & 98.82 / 100 & 0.149   & 93.91 / 100 & 0.1425  & 75.56 / 100   & 0.0176   \\ 
            \textbf{\textit{AFS}}         & 98.73 / 100 & 0.1506  & 92.96 / 100 & 0.2103  & 65.76 / 100   & 0.0813   \\ 
            \textbf{\textit{EW}}          & 98.55 / 100 & 0.0025  & 92.89 / 100 & 0.1335  & 75.98 / 100   & 0.0088   \\ 
            \textbf{\textit{Blind}}       & 98.64 / 100 & 0.1432  & 91.18 / 100 & 0.1452  & 67.22 / 100   & 0.0266    \\
            \textbf{\textit{Mark}}        & 98.57 / 100 & 0.0025  & 93.97 / 100 & 0.1381  & 65.25 / 100    & 0.0125    \\
            \hline
    \end{tabular}
}
\end{table}

\subsection{Performance of Watermarked Models}\label{app:target-models}
The performance of the well-trained DNN models protected by ten different black-box watermarks on three benchmark datasets is shown in Table~\ref{tab:target_models}.

\subsection{Details of Baseline Removal Attacks}\label{app:removal}
Here we provide some implementation details of the baseline removal attacks.
For the \textit{Pruning} attack, we set the weight prune ratio to 80\%.
For the \textit{Fine-pruning} attack, we set the neuron prune ratio to 20\%.
For the \textit{Finetuning} attack, we start with a learning rate of 0.05 as in \cite{libo2019leveraging}, followed by a gradually decreasing learning rate schedule.
For \textit{Regularization} and \textit{Distraction}, we strictly follow the original papers.
For \textit{Laundering}, we follow the main method in \cite{Aiken2020Laundering} and implement it with a certain watermark knowledge. We use the original labels for watermarks using non-OOD data, and least-likely labels for watermarks using OOD data.


\subsection{Details of the Compared Recovering Methods}\label{app:comp_recover_impt}

During ablation studies in Section~\ref{sec:inv_comp}, to better control other variables and perform a fair comparison on the recovering methods, we assume the attacker has got the ground-truth knowledge of the target label settings (i.e., fixed-class with target class 6 for \textit{Content} and \textit{EWE}, and non-fixed for \textit{Adi}) and attempts to perform an adaptive removal attack.

To further ensure the comparison fairness, we implement the improved \textsc{Dehydra} with the splitting algorithm omitted.
For a fixed-class watermark, we specifically unlearn the recovered samples corresponding to its target class (or the poisoned samples applied with a trigger towards the target class).
For a non-fixed-class watermark, we unlearn the recovered samples from all classes (or the poisoned samples applied with different triggers towards all classes).

\section{More Experiment Results}

\subsection{Salient Neurons Analysis of Recovered Samples}\label{app:salient_neuron}
In this section, we follow the settings in Section~\ref{sec:unlearn2} and validate the hypothesis of normal data dominance on all the ten black-box watermarks.
Concretely, for the ten batches of normal samples $X_0, X_1, \dots, X_9$ and one batch of watermark samples $X_{wm}$, we capture the salient neurons of each batch, e.g. $SN(X_i)$, and calculate its intersection ratio w.r.t. those of the recovered samples $SN(B_6)$.
Note that for each watermarked model, the sum of intersection ratios of all the classes could be larger than 1, since different classes may share a few common salient neurons.

As is shown in Figure~\ref{fig:neuron_cross}, both the fixed-class and non-fixed-class watermarks have witnessed the noisy activations, as the recovered samples would activate salient neurons of samples from classes other than 6.
Moreover, $\lvert SN(B_6)\rvert \cap \lvert SN(X_6)\rvert$ is always larger than$\lvert SN(B_6)\rvert \cap \lvert SN(X_{wm})\rvert$, implying that the most consistent activations are correlated with the normal samples of class 6.
Note that for non-fixed-class watermarks, the gap still holds but is less evident, possibly due to their more complex input-output pairing relation, which leaves the watermark data more entangled with the normal data.

\begin{figure}[tb]
\centering
\vspace{-0.1in}
\includegraphics[width=0.7\linewidth]{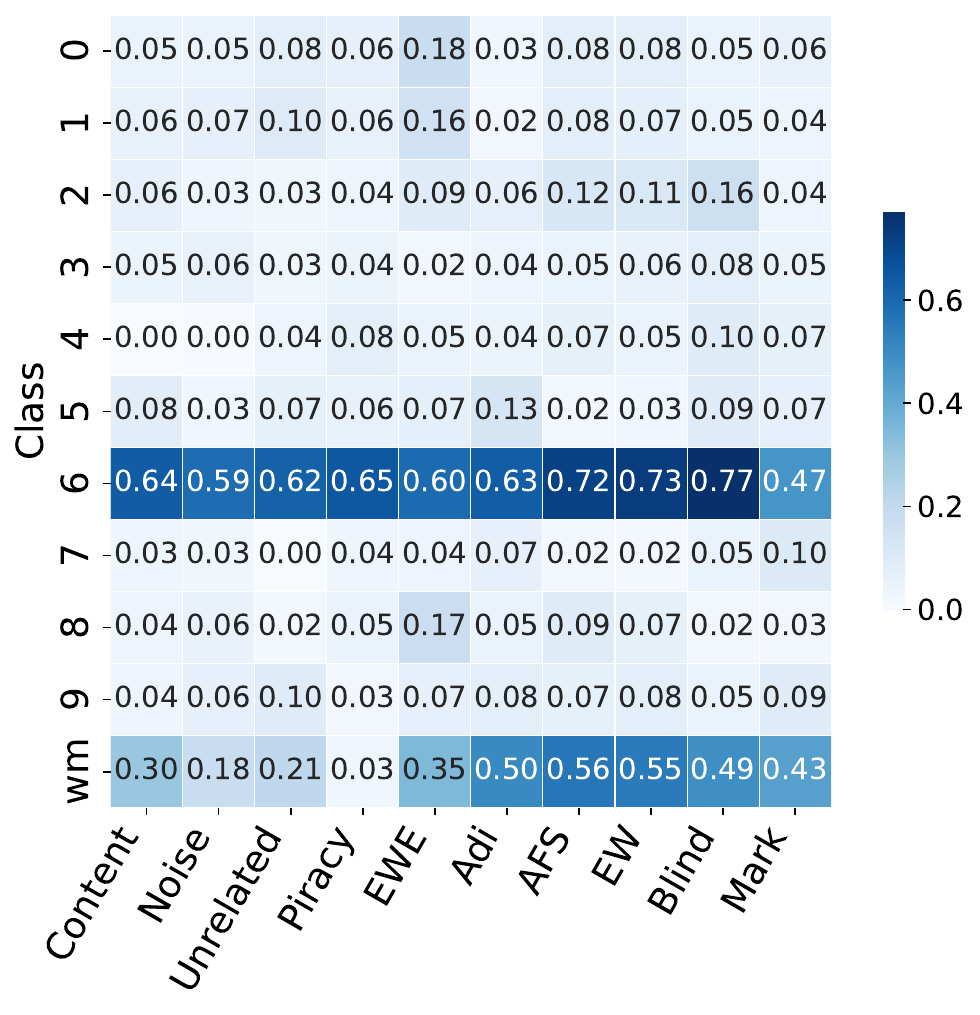}
\vspace{-0.1in}
\caption{Class-wise intersection ratio between the salient neurons of the real samples and the recovered samples.}
\label{fig:neuron_cross}
\vspace{-0.15in}
\end{figure}

\subsection{Attack Results on MNIST}\label{app:atk-mnist}

Note that MNIST is a relatively simple task, and the LeNet-5~\cite{lecun1998lenet} structure we select has no batch normalization layers.
In this case, we simply set the term $\mathcal{R}_{bn}(B)=0$ during watermark recovering.
Both the basic and improved \textsc{Dehydra} still perform well, illustrating the generality of our methods.

We select 1000 samples for the in-distribution setting, and 2000 samples from EMNIST~\cite{cohen2017emnist} for the transfer setting.
The attack results on different data settings are shown in Table~\ref{tab:in_distribution_mnist}, Table~\ref{tab:transfer_mnist}, and  Table~\ref{tab:data_free_mnist}.

\begin{table*}[htbp]
\caption{Comparison of removal attacks against black-box watermarks under the in-distribution setting on MNIST.
x / y denotes the clean accuracy / rescaled watermark accuracy.
}
\label{tab:in_distribution_mnist}
\vspace{-0.1in}
\centering
\small
\setlength{\tabcolsep}{2.5pt}
\begin{tabular}{lcccccccccc}
\hline
\textbf{Attacks} & \textbf{\textit{Content}} & \textbf{\textit{Noise}} & \textbf{\textit{Unrelated}} & \textbf{\textit{Piracy}} & \textbf{\textit{EWE}} & \textbf{\textit{Adi}} & \textbf{\textit{AFS}} & \textbf{\textit{EW}} & \textbf{\textit{Blind}} & \textbf{\textit{Mark}} \\
\hline
\textit{None}                & 99.1 / 100.0& 99.0 / 100.0& 98.9 / 100.0& 99.2 / 100.0& 98.9 / 100.0& 98.8 / 100.0& 98.7 / 100.0& 98.5 / 100.0& 98.6 / 100.0& 98.6 / 100.0 \\
\hline
\textit{Fine-pruning}        & 96.6 / 53.6& 97.0 / 47.4& 97.7 / 100.0& 98.4 / 75.0& 97.8 / 100.0& 96.0 / 52.4& 95.6 / 50.0& 97.4 / 54.4& 97.3 / 49.8& 96.0 / 53.4 \\
\textit{Finetuning}          & 97.7 / 60.7& 97.2 / 45.2& 97.4 / 99.5& 99.1 / 100.0& 97.4 / 87.0& 95.3 / 48.3& 97.3 / 58.8& 95.3 / 54.4& 95.9 / 49.2& 95.8 / 52.9 \\
\textit{Regularization}      & 97.4 / 50.6& 97.6 / 45.8& 97.2 / 53.9& 97.3 / 41.6& 97.7 / 51.4& 97.4 / 55.3& 97.0 / 50.6& 95.7 / 54.4& 95.9 / 51.0& 96.0 / 53.9 \\
\textit{Distraction}         & 90.4 / 52.6& 87.3 / 44.6& 90.8 / 45.7& 99.2 / 100.0& 95.0 / 49.9& 97.9 / 55.9& 85.8 / 50.6& 85.5 / 55.9& 94.4 / 51.6& 93.9 / 54.4 \\
\textit{Laundering}          & 97.0 / 51.6& 97.7 / 46.9& 96.0 / 45.2& 98.6 / 36.7& 96.0 / 53.4& 95.4 / 48.3& 96.6 / 50.0& 95.8 / 55.4& 96.9 / 49.8& 96.2 / 53.4 \\
\hline

\textsc{Dehydra}\textsubscript{Basic}      & 96.7 / 50.6& 96.7 / 44.6& 93.8 / 45.2& 96.3 / 37.4& 97.1 / 49.9& 96.7 / 49.5& 96.9 / 46.4& 95.9 / 53.9& 96.1 / 49.2& 96.9 / 52.9 \\
\textsc{Dehydra}\textsubscript{Improved}   
& 98.7 / 52.1& 99.0 / 45.8& 98.7 / 45.2& 98.8 / 36.0& 98.3 / 51.4& 97.2 / 46.5& 97.0 / 47.0& 96.4 / 53.9& 96.5 / 49.8 & 97.1 / 52.9 \\
\hline
\end{tabular}
\end{table*}

\begin{table*}[htbp]
\caption{Comparison of removal attacks against black-box watermarks under the transfer setting on MNIST.
}
\label{tab:transfer_mnist}
\vspace{-0.1in}
\centering
\small
\setlength{\tabcolsep}{2.5pt}
\centering

\begin{tabular}{c cccccccccc}
\hline
\textbf{Attacks} & \textbf{\textit{Content}} & \textbf{\textit{Noise}} & \textbf{\textit{Unrelated}} & \textbf{\textit{Piracy}} & \textbf{\textit{EWE}} & \textbf{\textit{Adi}} & \textbf{\textit{AFS}} & \textbf{\textit{EW}} & \textbf{\textit{Blind}} & \textbf{\textit{Mark}} \\
\hline
    \textit{None}    & 99.1 / 100.0& 99.0 / 100.0& 98.9 / 100.0& 99.2 / 100.0& 98.9 / 100.0& 98.8 / 100.0& 98.7 / 100.0& 98.5 / 100.0& 98.6 / 100.0& 98.6 / 100.0 \\
\hline
    \textit{Regularization}
            & 73.1 / 75.8& 61.0 / 60.9& 65.5 / 99.5& 67.8 / 91.0& 77.8 / 49.9& 72.0 / 58.9& 62.6 / 58.2& 72.2 / 59.9& 80.3 / 53.9& 74.7 / 57.4 \\
\textit{Distraction}
            & 85.9 / 52.1& 91.8 / 44.1& 96.2 / 81.4& 99.2 / 100.0& 98.9 / 100.0& 95.9 / 61.8& 87.5 / 60.0& 88.2 / 53.9& 98.5 / 90.1& 94.8 / 62.4 \\
\hline
\textsc{Dehydra}\textsubscript{Basic} 
            & 83.2 / 54.1& 96.3 / 46.3& 95.4 / 51.2& 96.9 / 42.3& 97.0 / 53.4& 94.8 / 61.2& 90.9 / 55.3& 90.3 / 75.4& 91.8 / 64.4& 90.8 / 84.0 \\
\textsc{Dehydra}\textsubscript{Improved} 
            & 96.8 / 53.1& 98.6 / 44.6& 97.3 / 45.2& 98.1 / 36.0& 97.9 / 49.9& 95.8 / 60.6& 90.6 / 51.1& 92.3 / 62.4& 91.9 / 62.7& 91.1 / 61.9 \\
\hline
\end{tabular}
\end{table*}
\begin{table}[tb]
\caption{Comparison of removal attacks against black-box watermarks under the data-free setting on MNIST.
}
\label{tab:data_free_mnist}
\vspace{-0.1in}
\centering
\small
\setlength{\tabcolsep}{1.5pt}
\begin{tabular}{c cccccc}
\hline
\textbf{Attacks} & \textbf{\textit{Content}} & \textbf{\textit{Noise}} & \textbf{\textit{Unrelated}} & \textbf{\textit{Piracy}} & \textbf{\textit{EWE}} \\
\hline
    \textit{None}        & 99.1 / 100.0& 99.0 / 100.0& 98.9 / 100.0& 99.2 / 100.0& 98.9 / 100.0 \\
\hline
    \textit{Pruning}     & 98.9 / 100.0& 98.3 / 100.0& 98.8 / 100.0& 99.1 / 100.0& 98.5 / 99.5 \\
\hline
\makecell{\textsc{Dehydra}\\(Improved)}
                & \makecell{93.2 / 49.6} & \makecell{97.8 / 44.1} & \makecell{89.3 / 47.9} & \makecell{97.3 / 36.0} & \makecell{94.5 / 49.9} \\


\hline
\end{tabular}
\end{table}

\subsection{Attack Results on CIFAR-100}\label{app:atk-cifar100}

For the CIFAR-100 dataset, we select 1000 samples for the in-distribution setting, and 2000 samples from CIFAR-10 for the transfer setting.
The attack results on different data settings are shown in Table~\ref{tab:in_distribution_100}, Table~\ref{tab:transfer_100}, and  Table~\ref{tab:data_free_100}.

\begin{table*}[!tb]
\caption{Comparison of removal attacks against black-box watermarks under the in-distribution setting on CIFAR-100.
}
\label{tab:in_distribution_100}
\vspace{-0.1in}
\centering
\small
\setlength{\tabcolsep}{2.5pt}
\begin{tabular}{c cccccccccc}
\hline
\textbf{Attacks} & \textbf{\textit{Content}} & \textbf{\textit{Noise}} & \textbf{\textit{Unrelated}} & \textbf{\textit{Piracy}} & \textbf{\textit{EWE}} & \textbf{\textit{Adi}} & \textbf{\textit{AFS}} & \textbf{\textit{EW}} & \textbf{\textit{Blind}} & \textbf{\textit{Mark}} \\
\hline
\textit{None}                & 76.3 / 100.0& 76.7 / 100.0& 75.9 / 100.0& 73.7 / 100.0& 67.7 / 100.0& 75.6 / 100.0& 65.8 / 100.0& 76.0 / 100.0& 67.2 / 100.0& 65.2 / 100.0 \\
\hline
\textit{Fine-pruning}        & 56.3 / 53.9& 58.6 / 54.3& 58.4 / 48.4& 7.6 / 40.0& 63.3 / 95.5& 57.5 / 56.2& 51.2 / 70.6& 52.6 / 61.7& 10.6 / 49.1& 51.3 / 61.5 \\
\textit{Finetuning}          & 67.3 / 100.0& 68.5 / 100.0& 67.9 / 100.0& 12.5 / 40.0& 60.1 / 99.0& 69.5 / 99.5& 63.2 / 98.9& 68.5 / 96.5& 31.2 / 59.4& 63.4 / 98.0 \\

\textit{Regularization}      & 58.2 / 100.0& 59.2 / 100.0& 60.4 / 100.0& 60.5 / 40.0& 25.9 / 49.7& 57.1 / 100.0& 31.9 / 69.0& 59.5 / 96.5& 36.1 / 70.7& 34.7 / 52.4 \\

\textit{Distraction}         & 56.1 / 48.7& 50.1 / 49.2& 60.9 / 48.4& 58.7 / 46.0& 64.7 / 100.0& 59.8 / 51.1& 63.7 / 99.5& 56.6 / 50.6& 27.0 / 49.1& 64.4 / 85.8 \\
\textit{Laundering}          & 67.6 / 49.8& 68.0 / 49.2& 67.8 / 48.4& 46.7 / 40.6& 47.3 / 49.7& 58.9 / 71.0& 60.1 / 96.7& 66.9 / 93.9& 33.3 / 49.1& 62.1 / 91.9 \\

\hline

\textsc{Dehydra}\textsubscript{Basic}        & 72.6 / 48.7& 63.2 / 57.9& 63.0 / 55.6& 65.8 / 44.8& 64.5 / 49.7& 62.1 / 57.2& 65.2 / 57.5& 61.2 / 57.1& 54.5 / 58.4& 64.0 / 60.5 \\
\textsc{Dehydra}\textsubscript{Improved}      & 75.1 / 48.7& 73.9 / 50.2& 73.1 / 53.1& 71.0 / 41.8& 67.1 / 49.7& 64.8 / 60.3& 65.4 / 54.8& 62.8 / 59.6& 65.1 / 55.8& 64.5 / 58.5 \\
\hline
\end{tabular}
\end{table*}

\begin{table*}[!tb]
\caption{Comparison of removal attacks against black-box watermarks under the transfer setting on CIFAR-100.
}
\label{tab:transfer_100}
\vspace{-0.1in}
\centering
\small
\setlength{\tabcolsep}{2.5pt}
\begin{tabular}{c cccccccccc}
\hline
\textbf{Attacks} & \textbf{\textit{Content}} & \textbf{\textit{Noise}} & \textbf{\textit{Unrelated}} & \textbf{\textit{Piracy}} & \textbf{\textit{EWE}} & \textbf{\textit{Adi}} & \textbf{\textit{AFS}} & \textbf{\textit{EW}} & \textbf{\textit{Blind}} & \textbf{\textit{Mark}} \\
\hline
    \textit{None}    & 76.3 / 100.0& 76.7 / 100.0& 75.9 / 100.0& 73.7 / 100.0& 67.7 / 100.0& 75.6 / 100.0& 65.8 / 100.0& 76.0 / 100.0& 67.2 / 100.0& 65.2 / 100.0 \\
\hline
    \textit{Regularization}
            & 59.3 / 100.0& 53.9 / 100.0& 52.6 / 100.0& 48.4 / 100.0& 42.3 / 49.7& 60.3 / 100.0& 52.2 / 90.2& 57.1 / 95.5& 43.8 / 88.2& 49.3 / 52.4 \\
\textit{Distraction}
            & 76.3 / 90.3& 76.4 / 100.0& 76.0 / 99.5& 72.3 / 52.6& 64.9 / 100.0& 56.3 / 50.1& 63.1 / 97.8& 59.4 / 57.6& 66.7 / 98.5& 63.8 / 77.7 \\
\hline
\textsc{Dehydra}\textsubscript{Basic} 
            & 72.7 / 48.7& 56.6 / 51.3& 62.3 / 58.2& 64.1 / 50.2& 56.2 / 59.8& 59.1 / 58.3& 58.4 / 54.8& 59.5 / 59.1& 55.7 / 54.8& 55.4 / 60.0 \\
\textsc{Dehydra}\textsubscript{Improved} 
            & 74.0 / 48.7& 75.2 / 49.7& 70.0 / 58.2& 71.0 / 44.8& 62.2 / 59.3& 55.1 / 52.7& 58.8 / 49.9& 61.9 / 58.6& 56.1 / 54.3& 61.1 / 58.5 \\
\hline
\end{tabular}
\end{table*}

\begin{table}[tb]
\caption{Comparison of removal attacks against black-box watermarks under the data-free setting on CIFAR-100.
}
\label{tab:data_free_100}
\vspace{-0.1in}
\centering
\footnotesize
\setlength{\tabcolsep}{1.5pt}
\centering
\begin{tabular}{c cccccc}
\hline
\textbf{Attacks} & \textbf{\textit{Content}} & \textbf{\textit{Noise}} & \textbf{\textit{Unrelated}} & \textbf{\textit{Piracy}} & \textbf{\textit{EWE}} \\
\hline
    \textit{None}        & 76.3 / 100.0& 76.7 / 100.0& 75.9 / 100.0& 73.7 / 100.0& 67.7 / 100.0 \\
\hline
    \textit{Pruning}     & 66.4 / 100.0& 66.9 / 100.0& 58.7 / 100.0& 26.7 / 40.0& 60.2 / 100.0 \\
\hline

\makecell{\textsc{Dehydra}\\(Improved)} & 

\makecell{70.0 / 50.8}& \makecell{ 72.2 / 51.8} & \makecell{61.1 / 58.7 }& \makecell{71.1 / 41.8}& \makecell{57.5 / 60.3} \\


\hline
\end{tabular}
\end{table}

\section{Proof of Watermark Smoothness Discrepancy}\label{sec:proof}

\subsection{Problem Definition and Theorems}
We formulate the \textit{watermark smoothness discrepancy} hypothesis in Theorem \ref{th:wm_smooth_complete} on mixture Gaussian distributions w.r.t. a linear model $f(x) = \text{sign} (\langle \boldsymbol{w}, \boldsymbol{x} \rangle +b)$.
These settings are widely used in previous analytical studies for trustworthy machine learning due to the intractable complexity of analyzing DNNs \cite{ma2022tradeoff, xu2021robust}.

Concretely, suppose a \louis{balanced} binary classification dataset $\mathcal{D}_{nor} = (\boldsymbol{x},y) \in \mathbb{R}^d \times \{\pm 1\}$, with underlying distributions $\mathcal{N}\left(\boldsymbol{\mu_+}, \sigma^2 I\right)$ and $\mathcal{N}\left(\boldsymbol{\mu_-}, \sigma^2 I\right)$ for the two classes.
The watermark data, $\mathcal{D}_{wm} = \mathcal{N}(\boldsymbol{\mu_{wm}}, \sigma_{wm}^2 I) \times \{+1\}$, have a fixed target class ``+1'' and are sampled with ratio $p$ compared to the normal dataset during training.
Also, we define the input smoothness $S_I$ and parameter smoothness $S_P$ to be the performance of model under input-level and parameter-level Gaussian perturbations respectively.
Then we have the following theorem:

\begin{theorem}\label{th:wm_smooth_complete}
Consider the optimal linear model $f^*(x) = \text{sign} (\langle \boldsymbol{w^*}, \boldsymbol{x} \rangle +b^*)$ obtained by minimizing the misclassification risk on $\mathcal{D}_{nor} \cup p\mathcal{D}_{wm}$.
If the watermark data satisfies (1) $\frac{p}{0.5+p}(\sigma_{wm}^2-\sigma^2) + \frac{0.5p}{(0.5+p)^2}(\boldsymbol{\mu_{wm}}-\boldsymbol{\mu_+})^2 >0$
and (2) $p > 0.5 \sqrt{\frac{0.5}{0.5+p} + \frac{p}{0.5+p}\frac{\sigma_{wm}^2}{\sigma^2}+\frac{0.5p}{(0.5+p)^2}\frac{(\boldsymbol{\mu_{wm}}-\boldsymbol{\mu_+})^2}{\sigma^2}} - 0.5$,
then we have $S_{I,+1}(f^*) > S_{I,-1}(f^*)$ and $S_{P,+1}(f^*) > S_{P,-1}(f^*)$.
\end{theorem}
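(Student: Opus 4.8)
The plan is to collapse the whole statement into a one-dimensional comparison of two standardized margins. First I would project every quantity onto the Bayes-optimal linear direction, so that the two classes become one-dimensional Gaussians: the unaltered class $-1$ with mean $m_-$ and variance $\sigma^2$, and the \emph{effective} class $+1$, which is the mixture of the normal $+1$ data (weight $\tfrac{0.5}{0.5+p}$) and the watermark data (weight $\tfrac{p}{0.5+p}$). A direct mixture-variance computation gives the projected variance $s_+^2 = \tfrac{0.5\sigma^2+p\sigma_{wm}^2}{0.5+p} + \tfrac{0.5p(\boldsymbol{\mu_{wm}}-\boldsymbol{\mu_+})^2}{(0.5+p)^2}$, and subtracting $\sigma^2$ reproduces \emph{exactly} the left-hand side of hypothesis (1). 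Hence condition (1) is nothing but $s_+^2>\sigma^2$: the target class is genuinely more dispersed than the other one.

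Next I would pin down the optimal threshold $t$ (the direction being fixed by the 1D reduction). Writing the misclassification risk $R(t)=\pi_+\Phi(\tfrac{t-m_+}{s_+})+\pi_-\Phi(\tfrac{m_--t}{\sigma})$ with priors $\pi_+=\tfrac{0.5+p}{1+p}$ and $\pi_-=\tfrac{0.5}{1+p}$, the stationarity condition $R'(t)=0$ reads $\tfrac{\pi_+}{s_+}\phi(a)=\tfrac{\pi_-}{\sigma}\phi(c)$, where $a=\tfrac{m_+-t}{s_+}$ and $c=\tfrac{t-m_-}{\sigma}$ are the standardized margins of the two classes. Taking logarithms turns this transcendental identity into the clean algebraic relation $\tfrac{a^2-c^2}{2}=\ln\tfrac{\pi_+\sigma}{\pi_- s_+}=\ln\tfrac{1+2p}{\sqrt{s_+^2/\sigma^2}}$. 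The key observation is that the right-hand side is positive precisely when $1+2p>\sqrt{s_+^2/\sigma^2}$, which is exactly hypothesis (2). Therefore condition (2) is equivalent to $a>c$: despite its larger spread, the target class enjoys a strictly larger standardized margin because it is sampled frequently enough to push the boundary toward class $-1$.

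For input smoothness, perturbing the input adds an independent variance $\sigma_1^2$ to the projected score, so $S_{I,+1}=\Phi(\tfrac{d_+}{\sqrt{s_+^2+\sigma_1^2}})$ and $S_{I,-1}=\Phi(\tfrac{d_-}{\sqrt{\sigma^2+\sigma_1^2}})$ with $d_+=as_+$ and $d_-=c\sigma$. Writing each argument as $a\cdot h(s_+)$ and $c\cdot h(\sigma)$ with $h(s)=s/\sqrt{s^2+\sigma_1^2}$ increasing, both hypotheses act in the same direction — $a>c$ from (2) and $h(s_+)>h(\sigma)$ from (1) — so $S_{I,+1}>S_{I,-1}$ follows by monotonicity of $\Phi$. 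This also clarifies the remark: at zero noise both margins saturate $\Phi$, so the perturbation is precisely what renders the underlying margin discrepancy observable.

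Finally I would treat parameter smoothness, which is where I expect the real difficulty. Perturbing $\boldsymbol{w}$ makes the score noise $\langle\epsilon_2,x\rangle\sim\mathcal{N}(0,\sigma_2^2\|x\|^2)$, so the effective perturbation scale now depends on the input norm and the clean one-dimensional picture breaks. The plan is to concentrate $\|x\|^2$ around its class mean $R_c^2$ (valid in the high-dimensional regime these analyses assume) and apply the Gaussian-average identity $\mathbb{E}_{S}\Phi(S/\kappa)=\Phi(\mu_S/\sqrt{\kappa^2+\sigma_S^2})$, giving $S_{P,c}\approx\Phi(d_c/\sqrt{\sigma_2^2R_c^2/\|w\|^2+s_c^2})$. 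The comparison then reduces to $a\,H_+$ versus $c\,H_-$ with $H_c=(1+\sigma_2^2R_c^2/(\|w\|^2 s_c^2))^{-1/2}$; since the between-component term makes $s_+^2$ exceed the average within-class variance while $R_c^2$ is dominated by $d$ times that within-class variance, one checks $R_+^2/s_+^2<R_-^2/\sigma^2$, hence $H_+>H_-$, and $a>c$ closes the argument. The main obstacle is exactly this norm dependence: one must show that the larger norm and variance of the target class (which work \emph{against} its smoothness) are outweighed by its larger standardized margin, which requires the concentration estimate and a careful sign check. A secondary technical point is justifying the 1D reduction, since with heteroscedastic classes the optimal linear direction need not be exactly $\boldsymbol{\mu_+^{eff}}-\boldsymbol{\mu_-}$; I would either invoke the isotropy of the within-component covariances to fix the direction up to the rank-one between-component correction, or absorb this deviation into the projected quantities $m_\pm$ and $s_+$.
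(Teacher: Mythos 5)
Your overall strategy coincides with the paper's: reduce to a one-dimensional two-Gaussian problem where the effective $+1$ class is the normal/watermark mixture, characterize the optimal threshold by the stationarity condition $\frac{\pi_+}{s_+}\phi(a)=\frac{\pi_-}{\sigma}\phi(c)$, and observe that hypothesis (1) is exactly $s_+^2>\sigma^2$ while hypothesis (2) is exactly $\ln\frac{\pi_+\sigma}{\pi_-s_+}>0$ (the paper writes this as $K=\log\frac{\alpha\sigma_-}{(1-\alpha)\sigma_+}>0$ after mapping the means to $\pm\boldsymbol{1}$ by a translation/rotation/scaling, which is its way of handling the direction-fixing issue you flag at the end). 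Your input-smoothness step is correct and in fact cleaner than the paper's: writing the perturbed arguments as $a\cdot h(s_+)$ and $c\cdot h(\sigma)$ with $h(s)=s/\sqrt{s^2+\sigma_1^2}$ increasing replaces the paper's explicit algebraic manipulation of $\eta^*$ with a two-line monotonicity argument, and recovers the same remark that the noise is what separates smoothness from plain accuracy.

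The divergence — and the gap — is in the parameter-smoothness step. The paper does \emph{not} attempt the norm-dependent version you describe: it explicitly ``approximately stud[ies] the parameter smoothness on the centers'' of the two classes, i.e., it evaluates the perturbed classifier only at $\pm\boldsymbol{1}$, where $\|x\|^2=d$ is identical for both classes, so the noise scale cancels and the comparison collapses to $\eta^*>0$. You instead take on the honest problem with random $x$, where the score noise is $\mathcal{N}(0,\sigma_2^2\|x\|^2)$, and your closing inequality $R_+^2/s_+^2<R_-^2/\sigma^2$ is asserted rather than proved; it is not clear it holds in general, since $R_c^2=\mathbb{E}\|x\|^2$ depends on the (uncentered) class means and on how the between-component term of the mixture splits between the projected variance $s_+^2$ and the full trace, so the larger norm of the diffuse target class could in principle win. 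As written, that step would need either the concentration estimate made quantitative under explicit assumptions on $d$, $\|\mu\|$, $\sigma_{wm}$, or a retreat to the paper's center-point approximation (which you may as well adopt, since it reduces the claim to $a>c$, already established). Everything else in your proposal is sound.
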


\noindent \textbf{Remark.} \textit{The theorem intuitively indicates that, for a fixed-class watermark, the target class will exhibit a higher smoothness (both input-level and parameter-level) compared to other classes if the watermark samples are (1) diverse internally and different from the normal ones in the target class, and (2) sampled much more frequently.
During proof, we also find the additive input-level and parameter-level noise would further enlarge the class-wise discrepancies.}


\subsection{Proof Sketch}

We provide the proof sketch for the watermark smoothness discrepancy here.
The complete proof is shown in the next section.

\begin{proofsketch}
We first consider a simpler mixture Gaussian distribution, where $x\sim\mathcal{N}(\boldsymbol{1}, \sigma_+^2 I)$ for class ``+1'' with prior probability $\alpha$ and $x\sim\mathcal{N}(-\boldsymbol{1}, \sigma_-^2 I)$ for class ``-1'' with prior probability $1-\alpha$.
Under this setting, the parameter of the optimal model $f^*(x) = \text{sign} (\langle \boldsymbol{w^*}, \boldsymbol{x} \rangle +b^*)$ can be derived by solving the equation where the derivative of the misclassification risk equals zero, i.e. $\frac{dR(f^*)}{df^*}=0$.

Next, we prove two lemmas on $f^*$'s class-wise discrepancies for input smoothness and parameter smoothness, respectively.
Specifically, for input smoothness, \textit{if $\sigma_+ > \sigma_-$ and $\frac{\alpha\sigma_-}{(1-\alpha)\sigma_+} > 1$, then we have $S_{I,+1}(f^*) > S_{I,-1}(f^*)$}.
This means that if class ``+1'' has a larger variance and is sampled more heavily, then $f^*$ should perform better on class ``+1'' under input-level perturbations, compared with that on class ``-1''.
A similar conclusion holds for parameter smoothness.
During the derivation, we also find that the variance of the additive perturbations is positively correlated with the class-wise discrepancies.

Finally, for a watermarked model trained jointly on $\mathcal{D}_{nor} \cup p\mathcal{D}_{wm}$, we show that the training data belonging to class ``+1'', including $\mathcal{N}\left(\boldsymbol{\mu_+}, \sigma^2 I\right)$ and $\mathcal{N}(\boldsymbol{\mu_{wm}}, \sigma_{wm}^2 I)$, actually forms a new mixture Gaussian distribution.
Therefore, this problem can be easily reduced to the lemmas' settings above, by translating, rotating and scaling the coordinate system.
Substituting the concrete parameters of watermark data into the conditions $\sigma_+ > \sigma_-$ and $\frac{\alpha\sigma_-}{(1-\alpha)\sigma_+} > 1$ yields the final conditions for watermark smoothness discrepancy as in Theorem \ref{th:wm_smooth_complete}.
\end{proofsketch}

\subsection{Complete Proof}
We first consider a binary classification problem on $\mathcal{D}=(\boldsymbol{x},y) \in \mathbb{R}^d \times \{\pm 1\}$ with classes ``+1'' and ``-1'', each following a Gaussian distribution.
Without loss of generality, we assume the distribution of $\mathcal{D}$ as:
\begin{equation}
    y=\left\{\begin{array} { l l } 
{+ 1 , p = \alpha } \\
{ - 1 , p = 1 - \alpha , }
\end{array}  \boldsymbol{x} \sim \left\{\begin{array}{ll}
\mathcal{N}\left(\boldsymbol{1}, \sigma_{+}^2 I\right) \text{ if } y=+1 \\
\mathcal{N}\left(-\boldsymbol{1}, \sigma_{-}^2 I\right) \text{ if } y=-1,
\end{array}\right.\right.
\end{equation}
where $\alpha$ is the prior probability for class ``+1'', $\boldsymbol{1} = (1,1,\dots,1)$ is a $d$-dim vector and $I$ is a $d$-dim identity matrix.
This simple setting will be adapted to a more complex one in Section \ref{sec:proof4}, with additional watermark data incorporated during training.


We firstly derive the parameter form of the optimal model obtained during training, and analyze the class-wise smoothness discrepancies of the model during testing subsequently.

\subsubsection{Optimal Model}\label{sec:optimal_model}

Denote $f^*(x) = \text{sign} (\langle \boldsymbol{w^*}, \boldsymbol{x} \rangle +b^*)$ to be the optimal model.
Due to the assumed Gaussian distributions with origin symmetry, we can prove $w_1^*=w_2^*=\dots=w_d^* > 0$ for $\boldsymbol{w^*}$ by contradiction.
For simplicity, we can also define $\eta^*=\frac{b^*}{w_1^*}$, which is the parameter to solve for $f^*$.

Based on the properties of Gaussian distributions, the empirical risk to minimize for class ``+1'' is:
\begin{equation}
\begin{aligned}
    R_{+1}(f^*) & =\mathbb{P}\left(\sum_{i=1}^d w_i^* x_i+b^*<0\right)=\mathbb{P}\left(w_1^* \sum_{i=1}^d x_i+b^*<0\right) \\
    & =\Phi\left(-\frac{b^*+d w_1^*}{\sqrt{d} w_1^* \sigma_{+}}\right)
    = \Phi\left(-\frac{\eta^*+d}{\sqrt{d} \sigma_{+}}\right),
\end{aligned}
\end{equation}
and the empirical risk for class ``-1'' is:
\begin{equation}
\begin{aligned}
    R_{-1}(f^*) & =\mathbb{P}\left(\sum_{i=1}^d w_i^* x_i+b^*>0\right)=\mathbb{P}\left(w_1^* \sum_{i=1}^d x_i+b^*>0\right) \\
    & = 1 - \Phi\left(\frac{-b^* + d w_1^*}{\sqrt{d} w_1^* \sigma_{-}}\right)
    = \Phi\left(\frac{\eta^*-d}{\sqrt{d} \sigma_{-}}\right).
\end{aligned}
\end{equation}
Therefore, the total objective function for $f^*$ is:
\begin{equation}
    R\left(f^*\right)=
    \alpha \Phi\left(-\frac{\eta^*+d}{\sqrt{d} \sigma_{+}}\right)
    + (1-\alpha) \Phi\left(\frac{\eta^*-d}{\sqrt{d} \sigma_{-}}\right).
    \label{eq:liner_objective}
\end{equation}

We can find the optimal $\eta^*$ minimizing the above objective function by taking $\frac{d R(f^*)}{d\eta^*} = 0$, which can be expanded as:
\begin{equation}
    \begin{aligned}
        &\alpha \frac{1}{\sqrt{2 \pi}} \exp \left(-\frac{1}{2}\left(\frac{\eta^*+d}{\sqrt{d} \sigma_{+}}\right)^2\right) \frac{-1}{\sqrt{d} \sigma_{+}}+ \\
        &(1-\alpha) \frac{1}{\sqrt{2 \pi}} \exp \left(-\frac{1}{2}\left(\frac{\eta^*-d}{\sqrt{d} \sigma_{-}}\right)^2\right) \frac{1}{\sqrt{d} \sigma_{-}}=0.
    \end{aligned}
    \label{eq:equation_for_optimal}
\end{equation}
The solution to this equation is:
\begin{equation}
    \eta^* = d \frac{\sigma_{+}^2+\sigma_{-}^2
    -2\sigma_{+} \sigma_{-} 
    \sqrt{1 - \frac{K}{2d} \left(\sigma_{+}^2-\sigma_{-}^2\right)}}
    {\sigma_{+}^2-\sigma_{-}^2},
\label{eq:optimal_solution}
\end{equation}
where $K=\log(\frac{\alpha\sigma_-}{(1-\alpha)\sigma_+})$.
Note that we drop another solution to Equation \ref{eq:equation_for_optimal} since its decision boundary does not lie between the two Gaussians.
In this way, we derive the parameter $\eta^*$ of the optimal model $f^*$ for $\mathcal{D}$.

\subsubsection{Input Smoothness Discrepancy}
We first study the input smoothness $S_{I,+1}(f^*)$ and $S_{I,-1}(f^*)$ of the two classes, i.e. the performance of model on the class samples under input-level Gaussian perturbations $\mathcal{N}(\boldsymbol{0}, \sigma_I^2 I)$.

\begin{lemma}\label{th:input_smooth}
Consider the linear model $f^*(x) = \text{sign} (\langle \boldsymbol{w^*}, \boldsymbol{x} \rangle +b^*)$ obtained by minimizing Equation \ref{eq:liner_objective} on mixture Gaussian distribution $\mathcal{D}$.
If $\sigma_+ > \sigma_-$ and $\frac{\alpha\sigma_-}{(1-\alpha)\sigma_+} > 1$, then we have $S_{I,+1}(f^*) > S_{I,-1}(f^*)$.
\end{lemma}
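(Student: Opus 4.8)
The plan is to reduce the claim to a monotonicity comparison between two Gaussian tail probabilities, and crucially to use the \emph{first-order optimality condition} (Equation~\ref{eq:equation_for_optimal}) rather than the cumbersome closed form for $\eta^*$ in Equation~\ref{eq:optimal_solution}. First I would compute the two smoothness quantities explicitly. Adding the input perturbation $\epsilon\sim\mathcal N(\boldsymbol 0,\sigma_I^2 I)$ turns a class ``$+1$'' sample into $\boldsymbol x+\epsilon\sim\mathcal N(\boldsymbol 1,(\sigma_+^2+\sigma_I^2)I)$, and since $w_1^*=\dots=w_d^*>0$ the decision statistic $\langle\boldsymbol w^*,\boldsymbol x+\epsilon\rangle+b^*$ is again univariate Gaussian. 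Writing $u=\frac{d+\eta^*}{\sqrt d\,\sigma_+}$, $v=\frac{d-\eta^*}{\sqrt d\,\sigma_-}$ and $h(\sigma)=\sigma/\sqrt{\sigma^2+\sigma_I^2}$, a short computation gives $S_{I,+1}(f^*)=\Phi\!\big(u\,h(\sigma_+)\big)$ and $S_{I,-1}(f^*)=\Phi\!\big(v\,h(\sigma_-)\big)$, where $\Phi$ is the standard normal CDF. The ``decision boundary lies between the two Gaussians'' condition that was already used to select the root $\eta^*$ forces $-d<\eta^*<d$, hence $u,v>0$.

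Second, I would extract the one algebraic identity I actually need from optimality. Setting the two exponential terms of Equation~\ref{eq:equation_for_optimal} equal and taking logarithms collapses all the constants into $K=\log\frac{\alpha\sigma_-}{(1-\alpha)\sigma_+}$ and yields the clean relation $K=\tfrac12\,(u^2-v^2)$. This is the crux of the argument: it converts the second hypothesis $\frac{\alpha\sigma_-}{(1-\alpha)\sigma_+}>1$, i.e.\ $K>0$, directly into $u^2>v^2$, and because $u,v>0$ this upgrades to $u>v>0$.

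Third, I would combine the two inequalities. The first hypothesis $\sigma_+>\sigma_-$ makes the scaling factor larger on class ``$+1$'', since $h(\sigma)^2=1-\sigma_I^2/(\sigma^2+\sigma_I^2)$ is strictly increasing in $\sigma>0$, whence $h(\sigma_+)>h(\sigma_-)>0$. Multiplying the two strict positive inequalities $u>v>0$ and $h(\sigma_+)>h(\sigma_-)>0$ gives $u\,h(\sigma_+)>v\,h(\sigma_-)$, and strict monotonicity of $\Phi$ then delivers $S_{I,+1}(f^*)>S_{I,-1}(f^*)$, as required. As a by-product, because the gap $h(\sigma_+)-h(\sigma_-)$ grows with $\sigma_I$ over the relevant range, the same computation substantiates the remark that larger additive input noise further widens the class-wise discrepancy.

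The main obstacle is largely one of bookkeeping, together with resisting the temptation to substitute the radical expression for $\eta^*$: the proof stays short only if one works through the first-order condition to obtain $K=\tfrac12(u^2-v^2)$ and tracks the signs of $u$ and $v$ carefully, so that $u^2>v^2$ can legitimately be promoted to $u>v$. The one point that must be checked rather than assumed is precisely that both compared arguments are positive, which is exactly what the between-the-Gaussians root selection guarantees; the analogous parameter-level statement will then follow by the same template with the perturbation acting on $\boldsymbol w^*$ instead of $\boldsymbol x$.
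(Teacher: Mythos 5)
Your proof is correct and reaches the conclusion by a genuinely different route than the paper's. The paper substitutes the closed-form root $\eta^*$ from Equation~\ref{eq:optimal_solution} into $\Phi^{-1}(S_{I,+1})-\Phi^{-1}(S_{I,-1})$ and works through the algebra, using $\sqrt{1-\frac{K}{2d}(\sigma_+^2-\sigma_-^2)}<1$ to reduce the difference to a manifestly positive quantity. You instead factor the two smoothness values as $\Phi(u\,h(\sigma_+))$ and $\Phi(v\,h(\sigma_-))$ and extract the identity $K=\tfrac12(u^2-v^2)$ directly from the first-order condition (Equation~\ref{eq:equation_for_optimal}); the identity checks out (take logarithms of $\frac{\alpha}{\sigma_+}e^{-u^2/2}=\frac{1-\alpha}{\sigma_-}e^{-v^2/2}$), and it converts the hypothesis $K>0$ into $u^2>v^2$ without ever touching the radical. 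Your route is shorter, makes the two hypotheses do visibly separate work ($K>0$ controls the unperturbed margins, $\sigma_+>\sigma_-$ controls the attenuation factors $h$), and yields the paper's remark about noise enlarging the discrepancy essentially for free, since $h(\sigma_+)-h(\sigma_-)$ grows with $\sigma_I$.

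One loose end to tighten: promoting $u^2>v^2$ to $u>v>0$ requires $v>0$, i.e.\ $\eta^*<d$. You attribute this to the between-the-Gaussians root selection, which matches the paper's framing, but the lemma's hypotheses alone do not force it: from Equation~\ref{eq:optimal_solution}, $\eta^*<d$ holds iff $K<2d/\sigma_+^2$, while the radical stays real up to $K=2d/(\sigma_+^2-\sigma_-^2)$, so there is a window with $\eta^*\geq d$. The fix is one line. Under the hypotheses one always has $\eta^*>0$, hence $u>0$ and $S_{I,+1}=\Phi(u\,h(\sigma_+))>\tfrac12$; if $v\leq 0$ then $S_{I,-1}=\Phi(v\,h(\sigma_-))\leq\tfrac12$ and the conclusion is immediate, and otherwise your argument applies verbatim. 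A second minor point: $h$ is strictly increasing only when $\sigma_I>0$; at $\sigma_I=0$ it is constant, but $u>v$ alone still delivers the strict inequality, mirroring the paper's observation that the smoothness discrepancy degenerates to the natural performance discrepancy in that case.
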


\begin{proof}
Denote $\epsilon_I$ to be the noise added to each input element.
The input smoothness of class ``+1'' and ``-1'' can be expressed as follows:
\begin{equation}
S_{I,+1}(f^*) =\mathbb{P}\left(\sum_{i=1}^d w_i^* (x_i+\epsilon_I)+b^*>0\right)
= \Phi\left(\frac{\eta^* + d}{\sqrt{d(\sigma_{+}^2 + \sigma_I^2)}}\right),
\end{equation}
\begin{equation}
S_{I,-1}(f^*) =\mathbb{P}\left(\sum_{i=1}^d w_i^* (x_i+\epsilon_I)+b^*<0\right)
= \Phi\left(\frac{-\eta^* + d}{\sqrt{d(\sigma_{-}^2 + \sigma_I^2)}}\right).
\end{equation}
Here we can find that
\begin{equation}
\begin{aligned}
    &\frac{\Phi^{-1}(S_{I,+1}(f^*))}{\sqrt{d}} - \frac{\Phi^{-1}(S_{I,-1}(f^*))}{\sqrt{d}}
    = \frac{1+\frac{\eta^*}{d}}{\sqrt{\sigma_{+}^2 + \sigma_I^2}} - \frac{1 - \frac{\eta^*}{d}}{\sqrt{\sigma_{-}^2 + \sigma_I^2}} \\
    &= \frac{\sqrt{\sigma_{-}^2 + \sigma_I^2} \cdot (2\sigma_+^2-2\sigma_+\sigma_-\sqrt{1 - \frac{K}{2d} \left(\sigma_{+}^2-\sigma_{-}^2\right)})}
    {\sqrt{\sigma_{+}^2 + \sigma_I^2} \cdot \sqrt{\sigma_{-}^2 + \sigma_I^2} \cdot (\sigma_{+}^2-\sigma_{-}^2)} \\
    &- \frac{\sqrt{\sigma_{+}^2 + \sigma_I^2} \cdot (-2\sigma_-^2+2\sigma_+\sigma_-\sqrt{1 - \frac{K}{2d} \left(\sigma_{+}^2-\sigma_{-}^2\right)})}
    {\sqrt{\sigma_{+}^2 + \sigma_I^2} \cdot \sqrt{\sigma_{-}^2 + \sigma_I^2} \cdot (\sigma_{+}^2-\sigma_{-}^2)} \\
    &>_{(1)} \frac{\sqrt{\sigma_{-}^2 + \sigma_I^2} \cdot (2\sigma_+^2-2\sigma_+\sigma_-) - \sqrt{\sigma_{+}^2 + \sigma_I^2} \cdot (-2\sigma_-^2+2\sigma_+\sigma_-)}
    {\sqrt{\sigma_{+}^2 + \sigma_I^2} \cdot \sqrt{\sigma_{-}^2 + \sigma_I^2} \cdot (\sigma_{+}^2-\sigma_{-}^2)} \\
    &= \frac{2 \sigma_+\sigma_- (\sqrt{1 + \frac{\sigma_I^2}{\sigma_-^2}} - \sqrt{1 + \frac{\sigma_I^2}{\sigma_+^2}})}
    {\sqrt{\sigma_{+}^2 + \sigma_I^2} \cdot \sqrt{\sigma_{-}^2 + \sigma_I^2} \cdot (\sigma_{+}+\sigma_{-})} >_{(2)} 0.
\end{aligned}
\end{equation}
Therefore, we can prove $S_{I,+1}(f^*) > S_{I,-1}(f^*)$.

Note that the first inequality $>_{(1)}$ above exploits $\sqrt{1 - \frac{K}{2d} \left(\sigma_{+}^2-\sigma_{-}^2\right)} < 1$ since $\sigma_+ > \sigma_-$ and $K=\log(\frac{\alpha\sigma_-}{(1-\alpha)\sigma_+}) > 0$.
The second inequality $>_{(2)}$ holds due to the extra noise $\mathcal{N}(\boldsymbol{0}, \sigma_{I}^2 I)$, \textbf{which further enlarges the class-wise discrepancy}. 
In the extreme case of $\sigma_I=0$, this inequality turns to an equality, and the smoothness discrepancy degenerates to natural performance discrepancy.

\end{proof}

\subsubsection{Parameter Smoothness Discrepancy}
We also study the parameter smoothness $S_{P,+1}(f^*)$ and $S_{P,-1}(f^*)$ here, i.e. the performance of model on the class samples under parameter-level (weights and biases) Gaussian perturbations $\mathcal{N}(\boldsymbol{0}, \sigma_P^2 I)$.
Our finding is similar to Lemma \ref{th:input_smooth}.

\begin{lemma}\label{th:param_smooth}
Consider the linear model $f^*(x) = \text{sign} (\langle \boldsymbol{w^*}, \boldsymbol{x} \rangle +b^*)$ obtained by minimizing Equation \ref{eq:liner_objective} on mixture Gaussian distribution $\mathcal{D}$.
If $\sigma_+ > \sigma_-$ and $\frac{\alpha\sigma_-}{(1-\alpha)\sigma_+} > 1$, then we have $S_{P,+1}(f^*) > S_{P,-1}(f^*)$.
\end{lemma}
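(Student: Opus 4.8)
The plan is to mirror the proof of Lemma~\ref{th:input_smooth} almost verbatim, the only genuine difference being the \emph{effective} per-coordinate variance that enters the denominators of the smoothness expressions. First I would write the perturbed decision statistic $Z = \sum_{i=1}^d (w_i^* + \epsilon_{w,i}) x_i + (b^* + \epsilon_b)$, where $\epsilon_{w,i},\epsilon_b \sim \mathcal{N}(0,\sigma_P^2)$ are independent of each other and of $\boldsymbol{x}$. Conditioning on $\boldsymbol{x}$, the statistic $Z \mid \boldsymbol{x}$ is exactly Gaussian with mean $\langle \boldsymbol{w}^*, \boldsymbol{x}\rangle + b^*$ and variance $\sigma_P^2(\lVert \boldsymbol{x}\rVert_2^2 + 1)$, so that $\mathbb{P}(Z>0 \mid \boldsymbol{x}) = \Phi\!\big((\langle \boldsymbol{w}^*,\boldsymbol{x}\rangle + b^*)/(\sigma_P\sqrt{\lVert \boldsymbol{x}\rVert_2^2 + 1})\big)$.

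Next I would obtain a closed form for $S_{P,+1}(f^*) = \mathbb{E}_{\boldsymbol{x}}[\mathbb{P}(Z>0 \mid \boldsymbol{x})]$ (and symmetrically $S_{P,-1}$ with $Z<0$). Setting $w_1^*=1$ (the sign is scale-invariant, so this is the normalization implicit in $\eta^*=b^*/w_1^*$), for class ``$+1$'' the quantity $\lVert\boldsymbol{x}\rVert_2^2$ concentrates at its mean $d(1+\sigma_+^2)$ as $d$ grows, letting me replace the random denominator by the deterministic $\sigma_P\sqrt{d(1+\sigma_+^2)+1}$. The numerator $\langle\boldsymbol{w}^*,\boldsymbol{x}\rangle+b^*$ is then Gaussian with mean $d+\eta^*$ and variance $d\sigma_+^2$, and the identity $\mathbb{E}_{u\sim\mathcal{N}(m,s^2)}[\Phi(u/c)] = \Phi(m/\sqrt{c^2+s^2})$ yields $S_{P,+1}(f^*) = \Phi\!\big((d+\eta^*)/\sqrt{d\,\tilde{\sigma}_+^2}\big)$ and $S_{P,-1}(f^*) = \Phi\!\big((d-\eta^*)/\sqrt{d\,\tilde{\sigma}_-^2}\big)$, where $\tilde{\sigma}_{\pm}^2 = \sigma_{\pm}^2(1+\sigma_P^2) + \sigma_P^2(1+1/d)$. (Equivalently, the same expressions follow by matching the exact first two moments of $Z$ and treating the sum of products $\epsilon_{w,i}x_i$ as asymptotically Gaussian by the CLT.) These have exactly the form of the input-smoothness expressions with $\sigma_\pm^2+\sigma_I^2$ replaced by $\tilde{\sigma}_\pm^2$.

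Finally, since $\Phi$ is increasing, the claim $S_{P,+1}(f^*) > S_{P,-1}(f^*)$ reduces to $\frac{1+\eta^*/d}{\tilde{\sigma}_+} > \frac{1-\eta^*/d}{\tilde{\sigma}_-}$. Substituting the optimal $\eta^*$ from Equation~\ref{eq:optimal_solution} and writing $S := \sqrt{1 - \frac{K}{2d}(\sigma_+^2-\sigma_-^2)}$ gives $1+\eta^*/d = 2\sigma_+(\sigma_+-\sigma_- S)/(\sigma_+^2-\sigma_-^2)$ and $1-\eta^*/d = 2\sigma_-(\sigma_+ S-\sigma_-)/(\sigma_+^2-\sigma_-^2)$, exactly as in Lemma~\ref{th:input_smooth}. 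The hypotheses $\sigma_+>\sigma_-$ and $\frac{\alpha\sigma_-}{(1-\alpha)\sigma_+}>1$ force $K>0$ and hence $S<1$; bounding $\sigma_+-\sigma_- S > \sigma_+-\sigma_-$ and $\sigma_+ S-\sigma_- < \sigma_+-\sigma_-$ (the step $>_{(1)}$) reduces the inequality to $\tilde{\sigma}_-/\sigma_- > \tilde{\sigma}_+/\sigma_+$. The latter is immediate from $\frac{\tilde{\sigma}_\pm^2}{\sigma_\pm^2} = (1+\sigma_P^2) + \sigma_P^2(1+1/d)/\sigma_\pm^2$ together with $\sigma_+>\sigma_-$, and it tightens strictly only because of the added noise $\sigma_P>0$, matching the ``noise enlarges the discrepancy'' observation of the input case.

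The main obstacle is the second step: unlike the input-perturbation statistic, the parameter-perturbed $Z$ is \emph{not} exactly Gaussian, because it contains the products $\epsilon_{w,i}x_i$. The cleanest honest route is the conditioning-plus-concentration argument above, which replaces $\lVert\boldsymbol{x}\rVert_2^2$ by its mean and incurs only a $1/\sqrt{d}$-order error; once the effective variances $\tilde{\sigma}_\pm^2$ are in hand, the remaining algebra is a line-for-line reuse of Lemma~\ref{th:input_smooth} and poses no new difficulty.
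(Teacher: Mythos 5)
Your proposal is correct in its conclusion and in the level of rigor it operates at, but it takes a genuinely different route from the paper. The paper sidesteps the non-Gaussianity of the parameter-perturbed statistic entirely by evaluating the smoothness \emph{only at the two class centers} $\boldsymbol{1}$ and $-\boldsymbol{1}$ (it says so explicitly: ``we approximately study the parameter smoothness on the centers''). With $\boldsymbol{x}$ frozen at the centroid, both classes acquire the \emph{same} denominator $\sqrt{d+1}\,\sigma_P$, so the entire lemma collapses to the one-line observation that $\eta^*>0$, which follows from $S<1$ and $\sigma_+>\sigma_-$ via $\eta^* > d(\sigma_+-\sigma_-)^2/(\sigma_+^2-\sigma_-^2)>0$. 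You instead keep the class-conditional input randomness, condition on $\boldsymbol{x}$, concentrate $\lVert\boldsymbol{x}\rVert_2^2$ at its mean, and fold everything into class-dependent effective variances $\tilde{\sigma}_\pm^2=\sigma_\pm^2(1+\sigma_P^2)+\sigma_P^2(1+1/d)$, after which you rerun the algebra of Lemma~\ref{th:input_smooth} verbatim. Your moment computation checks out, and the final reduction to $\tilde{\sigma}_-/\sigma_->\tilde{\sigma}_+/\sigma_+$ is valid. What your version buys: it measures smoothness over the actual class distributions rather than their centroids (so it is closer to the empirical $SmoothAcc$ metric of Equation~\ref{eq:smoothacc}), and it recovers the ``additive noise enlarges the class-wise discrepancy'' observation for the parameter case, which the paper only exhibits for input noise. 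What the paper's version buys: it is an exact Gaussian computation once the centroid substitution is made, needing no concentration or CLT step, and its final inequality is far simpler. Both arguments lean on the same two facts from Section~\ref{sec:optimal_model} --- the closed form for $\eta^*$ and $S<1$ under the stated hypotheses --- so the essential content is shared; your extra concentration step introduces only a $O(1/\sqrt{d})$ error against a class gap that does not vanish with $d$, which is no less rigorous than the paper's own admitted approximation.
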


\begin{proof}
Since the product of two Gaussian random variables is hard to process, we approximately study the parameter smoothness on the centers of $\mathcal{N}(\boldsymbol{1}, \sigma_+^2I)$ and $\mathcal{N}(-\boldsymbol{1}, \sigma_-^2I)$, respectively.
Denote $\epsilon_P$ to be the noise added to each parameter element.
The parameter smoothness of class ``+1'' and ``-1'' can be expressed as follows:
\begin{equation}
\begin{aligned}
    S_{P,+1}(f^*) & =\mathbb{P}\left(\sum_{i=1}^d (w_i^* + \epsilon_P) \cdot \boldsymbol{1}+(b^*+\epsilon_P)>0\right) \\
    & = \Phi\left(\frac{b^* + dw_1^*}{\sqrt{d+1}\sigma_P}\right)
    = \Phi\left(\frac{w_1^* (d + \eta^*)}{\sqrt{d+1}\sigma_P}\right),
\end{aligned}
\end{equation}
\begin{equation}
\begin{aligned}
    S_{P,-1}(f^*) & =\mathbb{P}\left(\sum_{i=1}^d (w_i^* + \epsilon_P) \cdot (-\boldsymbol{1})+(b^*+\epsilon_P)<0\right) \\
    & = \Phi\left(\frac{-b^* + dw_1^*}{\sqrt{d+1}\sigma_P}\right)
    = \Phi\left(\frac{w_1^* (d - \eta^*)}{\sqrt{d+1}\sigma_P}\right).
\end{aligned}
\end{equation}

Note that $w_1^*$ is proved to be positive in Section \ref{sec:optimal_model}.
Since $\sigma_+ > \sigma_-$ and $K=\log(\frac{\alpha\sigma_-}{(1-\alpha)\sigma_+}) > 0$, we have $\sqrt{1 - \frac{K}{2d} \left(\sigma_{+}^2-\sigma_{-}^2\right)} < 1$ and hence $\eta^* > d \cdot \frac{\sigma_{+}^2+\sigma_{-}^2
    -2\sigma_{+} \sigma_{-}\cdot 1}
    {\sigma_{+}^2-\sigma_{-}^2}
    = d \cdot \frac{(\sigma_{+}-\sigma_-)^2}
    {\sigma_{+}^2-\sigma_{-}^2}
    > 0$.
Therefore, we can prove $S_{P,+1}(f^*) > S_{P,-1}(f^*)$.

\end{proof}

\subsubsection{Adaptation to a Watermarked Model}\label{sec:proof4}
Finally, we study the smoothness property of a linear model with fixed-class watermark (target class is ``+1'') in a more realistic setting.
For a normal dataset $\mathcal{D}_{nor} = (\boldsymbol{x},y) \in \mathbb{R}^d \times \{\pm 1\}$, suppose the distribution is:
\begin{equation}
    y=\left\{\begin{array} { l l } 
{ + 1 , p = 0.5 } \\
{ - 1 , p = 0.5 , }
\end{array} \boldsymbol{x} \sim \left\{\begin{array}{ll}
\mathcal{N}\left(\boldsymbol{\mu_+}, \sigma^2 I\right) \text { if } y=+1 \\
\mathcal{N}\left(\boldsymbol{\mu_-}, \sigma^2 I\right) \text { if } y=-1,
\end{array}\right.\right.
\end{equation}
and the watermark data, $\mathcal{D}_{wm} = \mathcal{N}(\boldsymbol{\mu_{wm}}, \sigma_{wm}^2 I) \times \{+1\}$, are sampled with ratio $p$ compared to the normal dataset during training.

\begin{theorem}\label{th:wm_smooth}
Consider the linear model $f^*(x) = \text{sign} (\langle \boldsymbol{w^*}, \boldsymbol{x} \rangle +b^*)$ obtained by minimizing Equation \ref{eq:liner_objective} on $\mathcal{D}_{nor} \cup p\mathcal{D}_{wm}$.
If the watermark data satisfies $\frac{p}{0.5+p}(\sigma_{wm}^2-\sigma^2) + \frac{0.5p}{(0.5+p)^2}(\boldsymbol{\mu_{wm}}-\boldsymbol{\mu_+})^2 >0$
and $p > 0.5 \sqrt{\frac{0.5}{0.5+p} + \frac{p}{0.5+p}\frac{\sigma_{wm}^2}{\sigma^2}+\frac{0.5p}{(0.5+p)^2}\frac{(\boldsymbol{\mu_{wm}}-\boldsymbol{\mu_+})^2}{\sigma^2}} - 0.5$,
then we have $S_{I,+1}(f^*) > S_{I,-1}(f^*)$ and $S_{P,+1}(f^*) > S_{P,-1}(f^*)$.
\end{theorem}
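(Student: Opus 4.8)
The plan is to reduce the watermarked setting to the canonical two-Gaussian setting of Lemmas~\ref{th:input_smooth} and~\ref{th:param_smooth} by viewing the combined training set as a fresh balanced-in-name-only binary problem whose positive class is a Gaussian mixture. First I would observe that, when the watermark data are sampled at ratio $p$, the total training mass splits as $0.5$ for class ``$-1$'' and $0.5+p$ for class ``$+1$'', so the effective priors become $1-\alpha = \frac{0.5}{1+p}$ and $\alpha = \frac{0.5+p}{1+p}$. The negative class is unchanged, still $\mathcal{N}(\boldsymbol{\mu_-},\sigma^2 I)$, whereas the positive class becomes a two-component mixture of $\mathcal{N}(\boldsymbol{\mu_+},\sigma^2 I)$ and $\mathcal{N}(\boldsymbol{\mu_{wm}},\sigma_{wm}^2 I)$ with internal weights $\frac{0.5}{0.5+p}$ and $\frac{p}{0.5+p}$.

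Next I would match the first two moments of this positive-class mixture to a single effective Gaussian. Its mean is the convex combination $\frac{0.5}{0.5+p}\boldsymbol{\mu_+} + \frac{p}{0.5+p}\boldsymbol{\mu_{wm}}$, and by the law of total variance its spread decomposes into a within-component part $\frac{0.5}{0.5+p}\sigma^2 + \frac{p}{0.5+p}\sigma_{wm}^2$ and a between-component part obtained from $w_1 w_2(\boldsymbol{\mu_{wm}}-\boldsymbol{\mu_+})^2$ with $w_1 w_2 = \frac{0.5p}{(0.5+p)^2}$ (using $w_1+w_2=1$). Summing these yields the effective positive-class variance
\begin{equation}
\sigma_{+}^2 = \tfrac{0.5}{0.5+p}\sigma^2 + \tfrac{p}{0.5+p}\sigma_{wm}^2 + \tfrac{0.5p}{(0.5+p)^2}(\boldsymbol{\mu_{wm}}-\boldsymbol{\mu_+})^2,
\end{equation}
while $\sigma_- = \sigma$. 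A translation placing the midpoint of the two effective means at the origin, followed by a rotation aligning the mean-difference with $\boldsymbol{1}$ and an isotropic rescaling, brings the problem into the exact canonical form assumed by the two lemmas.

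It then remains to show that the hypotheses of the lemmas, namely $\sigma_+ > \sigma_-$ and $\frac{\alpha\sigma_-}{(1-\alpha)\sigma_+} > 1$, are equivalent to conditions (1) and (2) of the theorem. Substituting $\sigma_-=\sigma$ into $\sigma_+^2 - \sigma^2 > 0$ and using $\frac{0.5}{0.5+p}-1 = -\frac{p}{0.5+p}$ collapses the first inequality to condition (1). For the second, plugging $\alpha = \frac{0.5+p}{1+p}$ and $\sigma_- = \sigma$ into $\frac{\alpha\sigma_-}{(1-\alpha)\sigma_+}>1$ gives $0.5+p > 0.5\,\sigma_+/\sigma$; expressing $\sigma_+/\sigma$ as the square root of $\sigma_+^2/\sigma^2$ reproduces condition (2) verbatim. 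With both hypotheses verified, Lemmas~\ref{th:input_smooth} and~\ref{th:param_smooth} directly deliver $S_{I,+1}(f^*) > S_{I,-1}(f^*)$ and $S_{P,+1}(f^*) > S_{P,-1}(f^*)$.

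The step I expect to be delicate is the moment-matching reduction itself: the positive class is genuinely a mixture rather than a single Gaussian, and its between-component covariance is rank-one along $\boldsymbol{\mu_{wm}}-\boldsymbol{\mu_+}$, hence anisotropic, whereas the lemmas presuppose isotropic $\sigma_\pm^2 I$. I would justify the reduction by noting that the optimal weight vector $\boldsymbol{w^*}$ lies along the line joining the two effective class means, so only the variance projected onto that decision direction enters the misclassification risk and the $\Phi(\cdot)$ expressions for smoothness; replacing the mixture by a Gaussian with matched mean and projected variance therefore preserves exactly the quantities the lemmas manipulate. Making this projection argument precise, and bounding the error from approximating the mixture tail by a single Gaussian tail, is the crux where the analysis trades full rigor for tractability.
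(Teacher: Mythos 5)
Your proposal follows essentially the same route as the paper's own proof: form the moment-matched effective Gaussian for the positive class (your $\sigma_+^2$ is exactly the paper's $\sigma_{mix}^2$, with the same prior $\alpha=\frac{0.5+p}{1+p}$), reduce to the canonical two-Gaussian setting by translation, rotation and isotropic rescaling, check that conditions (1) and (2) are precisely $\sigma_+>\sigma_-$ and $\frac{\alpha\sigma_-}{(1-\alpha)\sigma_+}>1$, and invoke Lemmas~\ref{th:input_smooth} and~\ref{th:param_smooth}. The one point worth noting is that the ``delicate step'' you flag --- replacing the genuinely non-Gaussian, anisotropic positive-class mixture by a single isotropic Gaussian with matched moments --- is not resolved in the paper either; the paper simply asserts that the class-``$+1$'' data ``forms a new mixture Gaussian distribution $\mathcal{N}(\boldsymbol{\mu_{mix}},\sigma_{mix}^2 I)$'' and proceeds, so your projection-based justification is, if anything, more candid about where rigor is traded for tractability.
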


\begin{proof}
The training data belonging to class ``+1'' forms a mixture Gaussian distribution $\mathcal{N}(\boldsymbol{\mu_{mix}}, \sigma_{mix}^2 I)$ with
\begin{equation}
\begin{aligned}
    \boldsymbol{\mu_{mix}} &=  \frac{0.5}{0.5+p}\boldsymbol{\mu_+} + \frac{p}{0.5+p}\boldsymbol{\mu_{wm}}, \\
    \sigma_{mix}^2 &= \frac{0.5}{0.5+p}\sigma^2 + \frac{p}{0.5+p}\sigma_{wm}^2 + \frac{0.5p}{(0.5+p)^2}(\boldsymbol{\mu_{wm}} - \boldsymbol{\mu_+})^2.
\end{aligned}
\end{equation}
The model $f^*$ are jointly trained on $\mathcal{N}(\boldsymbol{\mu_{mix}}, \sigma_{mix}^2 I) \times \{+1\}$ and $\mathcal{N}(\boldsymbol{\mu_{-}}, \sigma^2 I) \times \{-1\}$, with prior probability $\alpha=\frac{0.5+p}{1+p}$ for class ``+1''.

We can reduce this problem to the lemmas above by translating, rotating the coordinate system and scaling by $\frac{2\sqrt{d}}{\|\boldsymbol{\mu_{mix}} - \boldsymbol{\mu_-}\|}$,
to project the distribution $\mathcal{N}(\boldsymbol{\mu_{mix}}, \sigma_{mix}^2 I)$ to $\mathcal{N}(\boldsymbol{1}, \frac{4d}{(\boldsymbol{\mu_{mix}} - \boldsymbol{\mu_-})^2}\sigma_{mix}^2 I)$
and $\mathcal{N}(\boldsymbol{\mu_{-}}, \sigma^2 I)$ to $\mathcal{N}(-\boldsymbol{1}, \frac{4d}{(\boldsymbol{\mu_{mix}} - \boldsymbol{\mu_-})^2}\sigma^2 I)$.

In this way, the condition $\frac{p}{0.5+p}(\sigma_{wm}^2-\sigma^2) + \frac{0.5p}{(0.5+p)^2}(\boldsymbol{\mu_{wm}}-\boldsymbol{\mu_+})^2 >0$ yields
\begin{equation}
    \sigma_{mix}^2 > \sigma^2,
\end{equation}
and $p > 0.5 \sqrt{\frac{0.5}{0.5+p} + \frac{p}{0.5+p}\frac{\sigma_{wm}^2}{\sigma^2}+\frac{0.5p}{(0.5+p)^2}\frac{(\boldsymbol{\mu_{wm}}-\boldsymbol{\mu_+})^2}{\sigma^2}} - 0.5$ yields
\begin{equation}
    \frac{0.5+p}{0.5} > \frac{\sigma_{mix}}{\sigma}.
\end{equation}
Finally, we obtain $\frac{2\sqrt{d}}{\|\boldsymbol{\mu_{mix}} - \boldsymbol{\mu_-}\|}\sigma_{mix} > \frac{2\sqrt{d}}{\|\boldsymbol{\mu_{mix}} - \boldsymbol{\mu_-}\|}\sigma$
and $\frac{\alpha\cdot\frac{2\sqrt{d}}{\|\boldsymbol{\mu_{mix}} - \boldsymbol{\mu_-}\|}\sigma}{(1-\alpha)\cdot\frac{2\sqrt{d}}{\|\boldsymbol{\mu_{mix}} - \boldsymbol{\mu_-}\|}\sigma_{mix}} > 1$. Applying Lemma \ref{th:input_smooth} and \ref{th:param_smooth}, we can conclude that $S_{I,+1}(f^*) > S_{I,-1}(f^*)$ and $S_{P,+1}(f^*) > S_{P,-1}(f^*)$.
\end{proof}

\end{document}